\title{Attacks on the Search-RLWE problem with small errors}
\author{Hao Chen\inst{1} \and Kristin Lauter\inst{2} \and Katherine E. Stange \inst{3}}
\institute{Microsoft Research, Redmond, USA \\
 \email{haoche@microsoft.com} \\
 \and
Microsoft Research, Redmond, USA \\
 \email{klauter@microsoft.com} \\
\and
University of Colorado, Boulder, USA \\
\email{kstange@math.colorado.edu}}
\spnewtheorem{fact}{Fact}{\bfseries}{\rmfamily}
\renewcommand{\Mod}[1]{\ (\mathrm{mod}\ #1)}
\DeclareMathOperator{\disc}{disc}
\begin{document}
\maketitle

\begin{abstract}
  The Ring Learning-With-Errors (RLWE) problem shows great promise for post-quantum cryptography and homomorphic encryption.  We describe a new attack on the non-dual search RLWE problem with small error widths, using ring homomorphisms to finite fields and the chi-squared statistical test. 
In particular, we identify a ``subfield vulnerability'' (Section 5.2) and give a new attack which finds this vulnerability
by mapping to a finite field extension and detecting non-uniformity with respect to
the number of elements in the subfield.
We use this attack to give examples of vulnerable RLWE instances in Galois number fields. We also extend the well-known search-to-decision reduction result to Galois fields with any unramified prime modulus $q$, regardless of the residue degree $f$ of $q$, and we use this in our attacks. The time complexity of our attack is $O(n q^{2f})$, where $n$ is the degree of $K$ and $f$ is the {\it residue degree} of $q$ in $K$. We also show an attack on the non-dual (resp. dual) RLWE problem with narrow error distributions in prime cyclotomic rings when the modulus is a ramified prime (resp. any integer). We demonstrate the attacks in practice by finding many vulnerable instances and successfully attacking them.  We include the code for all attacks. \\ \\
{\bf Key words}: attacks, RLWE, cryptanalysis.
\end{abstract}

\section{Introduction}


The Ring Learning-with-Errors (RLWE) problem, proposed  in  \cite{lyubashevsky2013ideal}, is a variant of the Learning-with-Errors (LWE) problem, and is an active research area in lattice based cryptography, and a candidate for post-quantum cryptography. It has drawn increased attention because it can be used for homomorphic encryption (\cite{bos2013improved,brakerski2012leveled,brakerski2011fully,brakerski2014efficient,gentry2012fully,lopez2012fly,stehle2011making}).  The problem, which comes in \emph{search} and \emph{decision} variants, is based on the geometry of ideal lattices, in particular the rings of integers of number fields, or their duals.

It is of great importance to understand the security of RLWE.  The first piece of the puzzle is provided by proofs of security \cite{lyubashevsky2013ideal}.  However, it is also important to mount direct attacks on the problem and its variants, in order to illuminate the protective properties of the provably secure instances, the dangers of deviating from established parameters, and the practical runtimes for certain parameter sizes.  
This work is part of that programme, which has seen much recent interest, e.g.   \cite{eisentrager2014weak,elos2015weak}.  An eprint version of the current paper \cite{chenattacks} has already generated much follow-up work \cite{castryck2016error,castryck2016provably,chen2016vulnerable,peikert2016not}.  In this paper we provide a brief overview of past work, and then present attacks which are novel in their mathematical underpinnings (based on new homomorphisms to finite fields of higher degree which detect in particular ``subfield vulnerabilities'', see Section~\ref{subfield vulnerability}).
We also discuss the underlying number theory and geometry of these attacks to provide a framework for future work.

An instance of the RLWE problem is determined by a choice of a number field $K$ and a prime $q$ called the {\it modulus}, along with an error distribution. The authors of \cite{lyubashevsky2013ideal} proved a reduction from certain hard lattice problems to an instance of search RLWE involving a continuous Gaussian error distribution modulo the dual ideal $R^{\vee}$ of the ring of integers $R$ of $K$. Ducas and Durmus proposed a non-dual variant of RLWE in the cyclotomic setting and proved its hardness in \cite{ducas2012ring}. Also in \cite{lyubashevsky2013ideal}, a search-to-decision reduction was proved for RLWE problems in cyclotomic fields and modulus $q$ which splits completely. This reduction was then generalized in \cite{eisentrager2014weak} to hold for general Galois number fields where $q$ splits.  As an auxiliary result in this paper, we generalize this search-to-decision reduction to work for the case of unramified modulus $q$ of arbitrary degree.

The non-dual variant of RLWE generates the error distribution as a discrete Gaussian on the ring of integers $R$ under the canonical embedding, instead of in the image of the dual ideal.  The dual and non-dual variants are equivalent up to a change in the error distribution (see Section \ref{sec: background}).  For the non-dual variant of RLWE, the authors of \cite{elos2015weak} proposed an attack on the {\it decision} RLWE problem. The attack makes use of a modulus $q$ of residue degree $1$, giving a ring homomorphism $\rho: R \to \bF_q$ (so that it could be called a \emph{mod $q$} attack, although it differs from \cite{micciancio2009lattice,nguyen1999cryptanalysis}).  The attack works when, with overwhelming probability, the image of the RLWE error distribution under the map $\rho$ takes values only in a small subset of $\bF_q$.  The authors of \cite{elos2015weak} then gave an infinite family of examples vulnerable to the attack. Unfortunately, the vulnerable number fields in \cite{elos2015weak} are not Galois. Hence, the search-to-decision reduction theorem does not apply, and the attack can not be directly used to solve the search variant of RLWE for those instances.

In this paper, we generalize the attack of \cite{elos2015weak} to certain Galois number fields and moduli of higher degree. As a result, we have an attack on the {\it search} RLWE problem and an implementation of the attack on concrete RLWE instances, including the search-to-decision reduction.  Our attack is new in two major ways: first, the attack considers ring homomorphisms from $R \to \bF_{q^f}$, for $f>1$, instead of just homomorphisms from $R \to \bF_q$ (so it is no longer `mod $q$'); second, the error distribution is distinguished from random (i.e. from the uniform distribution) using the statistical chi-squared test, instead of relying on the values of the error polynomial to be small or in a small subset.  The attack aims at an intermediate problem used in the search-to-decision proof of \cite{lyubashevsky2013ideal}, which is to recover the secret modulo a prime ideal (denoted $\SRLWE(\cR,\fq)$; see Definition~\ref{def: srlwe mod q}). The time complexity of our attack is $O(nq^{2f})$, where $n$ is the degree of $K$ and $f$ is the {residue degree} of $q$ in $K$.

Importantly, we also show an attack on prime cyclotomic rings under certain assumptions on the modulus and error rate, which succeeds with high probability and with surprising efficiency.  First, we give attacks for the {\it decision} version of the non-dual variant of RLWE considered in \cite{elos2015weak}, when the modulus $q$ is equal to the unique ramified prime $p$. For example, we show that in dimension $n=808$, we can attack an RLWE instance in the cyclotomic ring $\mathbb{Q}(\zeta_{809})$ effectively in $35$ seconds, where the modulus is $809$.  This opens up the question of whether general cyclotomic fields are safe for cryptography, depending on whether modulus switching can be used to transfer this attack from the ramified modulus to other larger moduli which are used in practice.  In addition, we attack the {\it decision}
version of the {\it dual} RLWE problem on the $p$-th cyclotomic field with {\it arbitrary} modulus $q$, assuming that the width $r$ of the error distribution is around $\frac{1}{\sqrt{p}}$.

The error widths for which our attacks work are below those required by the security proof of \cite{lyubashevsky2013ideal}, which requires $r = \omega(\sqrt{\log n})$. In particular, this work does not affect the hardness results of \cite{lyubashevsky2013ideal}.  On the other hand, in practice, in implementations of homomorphic encryption systems based on the hardness of RLWE, it has been common practice to use small errors to improve efficiency for the systems. We show in this work that for errors in the width range below provably secure but above linear algebra vulnerable (errorless LWE), the security of RLWE depends in an interesting way on the choice of ring and modulus.  To be more specific, the geometry of the ring of integers and the manner in which certain prime ideals exist as sublattices are important factors (see Section 5.2). Finally, it is important to note that most implementations of RLWE-based schemes use exclusively 2-power cyclotomic rings, on which our attacks are not effective. Hence the impact of our attacks on the security of existing practical implementations of RLWE-based homomorphic encryption schemes is limited.

Auxiliary results we present include several stand-alone items of possibly independent interest: we prove a search-to-decision reduction  for Galois fields which applies for any unramified modulus $q$, regardless of the residue degree of $q$ (this relies heavily on Galois theory and Galois fields are the largest class to which we expect this to apply).  We also present some heuristic arguments as to whether modulus switching techniques are likely to be successfully combined with our attacks.

We end this section with a table summarizing what is known about the security of RLWE for certain choices of number fields.  The first table deals with the continuous dual version. For comparison, we normalize the error width:  let $\tilde{r} = r \cdot |d_K|^{1/2n}$, where $d_K$ is the discriminant of the number field $K$. 

\begin{table}[H]
\begin{center}
\caption{Security of dual RLWE}
\begin{tabular}{c|c|c|c}
field & modulus & $\tilde{r}$ & security \\
\hline
$\bQ(\zeta_{m})$ & {$q \equiv1 \mod{m}$, $q = poly(m)$} &  $\omega(\sqrt{\log n}) \cdot \Theta(\sqrt{n})$ & decision is secure \cite{lyubashevsky2013ideal} \\ \hline
Any & $q = poly(n)$ & $\omega(\sqrt{\log n}) \cdot |d_K|^{1/2n}$ & search is secure \cite{lyubashevsky2013ideal} \\ \hline
$\bQ(\zeta_p)$  & any & $\sim 1$ &  decision is not secure (this paper)\\
\end{tabular}
\end{center}
\end{table}

The second table deals with the non-dual discrete version.
Here we normalize by $\tilde{r} = r/|d_K|^{1/2n}$.
The heuristic expectation is that when $\tilde{r} =\Omega(\sqrt{n})$ and $q = poly(n)$, decision RLWE problem should be hard.

\begin{table}[H]
\begin{center}
\caption{Security of non-dual RLWE}
\begin{tabular}{c|c|c|c}
field & modulus & $\tilde{r}$ & security \\
\hline

$\bQ(\sqrt[n]{1-q})$ & $poly(n)$  & $\sim 1$ & decision is not secure \cite{elos2015weak} \\ \hline
$\bQ(\sqrt[n]{1-q})$ & $poly(n)$  & $\sim 1$ & search is not secure \cite{castryck2016provably} \\ \hline
certain $\bQ(\zeta_m)^H$ & {$poly(n)$ w. properties} & $\sim 1$ & search is not secure (this paper) \\ \hline
$\bQ(\zeta_p)$  & $p$ & $\sim 1$ &  decision is not secure (this paper) \\ \hline
certain $\bQ(\zeta_p, \sqrt{d})$ & {$poly(n)$  w. properties} & $o(\sqrt{d/p})$ &  search is not secure \cite{chen2016vulnerable} \\
\end{tabular}
\end{center}
\end{table}

\subsection{Organization}

In Section \ref{sec: background}, we review some definitions related to the RLWE problems. In Section \ref{sec: s-to-d}, we prove a search-to-decision reduction for Galois extensions $K$ and unramified moduli. In Section \ref{sec: chi-square}, we introduce an attack on non-dual RLWE problems based on the chi-square statistical test, which directly generalizes the attack in \cite{elos2015weak}.
In Section \ref{sec: sub-cyclotomics}, we give examples of subfields of cyclotomic fields vulnerable to our new attack, where the modulus $q$ has residue degree two.
In Section \ref{sec: ramified-prime}, we give attacks on the non-dual RLWE in prime cyclotomic fields when the modulus is the unique ramified prime and dual RLWE for any modulus, assuming the errors are sufficiently narrow.  In Section \ref{sec:mod}, we consider the possibility of combining modulus switching with our attack.


All computations in this paper were performed in Sage \cite{sage}. All the relevant code is available and  can be found at \url{https://github.com/haochenuw/GaloisRLWE}.

\subsection*{Acknowledgements}
The authors thank Chris Peikert for helpful discussions, and the anonymous referees for helpful suggestions.
The third author was supported by NSF grant DMS-1643552.

\section{Background} \label{sec: background}

Let $K$ be a number field of degree $n$ with ring of integers $R$, and let $\sigma_1, \cdots, \sigma_n$ be the embeddings of $K$ into the field of complex numbers. We define the {\it adjusted canonical embedding} of $K$ as follows: Let $r_1$, $r_2$ denote the number of real embeddings and conjugate pairs of complex embeddings of $K$. Without loss of generality, assume $\sigma_1, \cdots, \sigma_{r_1}$ are the real embeddings of $K$ and $\sigma_{r_1+r_2+j} = \overline{\sigma_{r_1 + j}}$ for $1 \leq j \leq r_2$. Then the adjusted canonical embedding is the
following map:

  \begin{align}
   \iota: K  \to   \bR^n : \quad x \mapsto  & \begin{bmatrix}
           \sigma_1(x) \\
           \vdots \\
           \sigma_{r_1}(x) \\
         \sqrt{2}  \Re(\sigma_{r_1+1})(x) \\
          \sqrt{2} \Im(\sigma_{r_1+1})(x) \\
          \vdots \\
         \sqrt{2} \Re(\sigma_{r_1+r_2})(x) \\
         \sqrt{2} \Im(\sigma_{r_1+r_2})(x).
         \end{bmatrix}
  \end{align}


It turns out that $\Lambda_R := \iota(R)$ is a lattice in $\bR^n$. Let $w = (w_1, \cdots , w_n)$ be an integral basis for $R$. The embedding matrix of $w$, denoted by $A_w$, is the $n$-by-$n$ matrix whose $i$-th column is $\iota(w_i)$. Note that the columns of $A_{w}$ form a basis for the lattice $\Lambda_R$.


For $\sigma > 0$, define the Gaussian function $\rho_\sigma: \bR^n \to (0,1]$, depending on the usual inner product on $\mathbb{R}^n$, to be $\rho_\sigma(x) = e^{-||x||^2/2\sigma^2}$.
\begin{definition}
For a lattice $\Lambda \subset \bR^n$ and $\sigma > 0$, the {\it discrete Gaussian distribution} on $\Lambda$ with parameter $\sigma$ is:
\[
    D_{\Lambda, \sigma}(x) = \frac{\rho_\sigma(x)}{\sum_{y \in\Lambda} \rho_\sigma(y)}, \, \forall x \in \Lambda.
\]
\end{definition}
Equivalently, the probability of sampling any lattice point $x$ is proportional to $\rho_\sigma(x)$. \\


We follow \cite{elos2015weak} in setting up the non-dual RLWE problem for general number fields.  In particular, the error distribution we use is a spherical discrete Gaussian distribution on $\Lambda_R$.
\begin{definition}
A (non-dual) {\it RLWE instance} is a tuple $\cR = (K,q,\sigma,s)$, where $K$ is a number field with ring of integers $R$, $q$ is a prime, $\sigma >0$ is a positive real number, and $s \in R/qR$ is called the {\it secret}.
\end{definition}

Suppose $\cR = (K,q,\sigma,s)$ is an RLWE instance and let $R$ be the ring of integers of $K$. The {\it error distribution} of $\cR$ is the discrete Gaussian distribution $D_{\Lambda_R,\sigma}$.  \\


Let $R_q$ denote the quotient ring $R/qR$;  then a (non-dual) RLWE sample is a pair
$$(a, b = as+e) \in R_q \times R_q, $$
where the first coordinate $a$ is chosen uniformly at random in $R_q$, and $e$ is sampled from the error distribution and considered modulo $qR$.   The reader unfamiliar with this problem should consider this analogous to a discrete logarithm pair $(g, g^s) \in \mathbb{F}_q \times \mathbb{F}_q$, where $s$ is a secret exponent.


\begin{definition}[Search RLWE]
  Let $\cR$ be an RLWE instance. The \emph{search RLWE problem}, denoted by $\SRLWE(\cR)$, is to discover $s$ given access to arbitrarily many independent samples $(a,b)$.
\end{definition}

\begin{definition}[Decision RLWE]
Let $\cR$ be an RLWE instance. The \emph{decision RLWE problem}, denoted by $\DRLWE(\cR)$, is to distinguish between the same number of independent samples in two distributions on $R_q \times R_q$. The first is the RLWE distribution of $\cR$, and the second consists of uniformly random and independent samples from $R_q \times R_q$.
\end{definition}

\begin{remark}
As pointed out in \cite{elos2015weak}, when analyzing the error distribution, one needs to take into account the sparsity of the lattice $\Lambda_R$, which is measured by its covolume $\det(\Lambda_R) = \sqrt{|\disc(K)|}$. In light of this, we define a relative version of the standard deviation parameter: $\sigma_0  = \frac{\sigma}{|\disc(K)|^{\frac{1}{2n}}}.$
\end{remark}

\begin{remark}
There are different approximate algorithms to sample from discrete Gaussian distributions on lattices. In this paper, we choose to use the sampling algorithm developed  in \cite{gentry2008trapdoors}.
\end{remark}

We now discuss dual RLWE and its relation to non-dual RLWE.  In dual RLWE, the secret $s$ lies in $R^\vee_q := R^{\vee}/qR^{\vee}$, where $R^{\vee}$ is the dual ideal of $R$, and the error $e$ is sampled from $R^\vee$ with discrete spherical Gaussian distribution with width $r = \sqrt{2 \pi} \sigma$.  Therefore the RLWE samples are of the form
\[
  (a, b = as + e ) \in R_q \times R^\vee_q.
\]


The dual and non-dual versions of the RLWE problem 
are very closely related when the dual ideal is principal: in this case, $R$ and $R^\vee$ are related by a scaling factor (which may alter a spherical Gaussian to an ellipsoidal one).  Even when $R^\vee$ is not principal, we have $R \subseteq k_1 R^\vee$ and $R^\vee \subseteq k_2R$ for some constants $k_1$ and $k_2$, so that a problem in one formulation can be reduced to a problem in the other, with a different error distribution.  Several of the non-dual examples of this paper are known to have principal dual ideal \cite{peikert2016not}.  In particular, our attack on the ramified prime for cyclotomic fields is translated to the dual situation in Section \ref{attackingdual}.  The  full class of elliptic Gaussians (not just spherical) is also considered in the security reductions of \cite{lyubashevsky2013ideal}.

Finally, there is a \emph{continuous} version of RLWE which is more amenable to security reductions.  Since one can always discretize, the continuous version reduces to the discrete one presented here, which is more practical for applications.

\section{Search-to-Decision Reduction}
\label{sec: s-to-d}

In \cite{eisentrager2014weak}, the search-to-decision reduction of \cite{lyubashevsky2013ideal} is extended to RLWE for Galois number fields, where $q$ is an unramified prime of degree one.  The approach is via an intermediate problem, denoted $\fq_i$-LWE in \cite{lyubashevsky2013ideal}.  In this section, we extend this result to primes $\fq$ of arbitrary residue degree.  Our intermediate problem, which we denote by $\SRLWE(\cR,\fq)$, is the same as $\fq_i$-LWE, and it amounts to finding the secret modulo the prime $\fq$.  The Galois group allows us to bootstrap this piece of information to discover the full secret.

The attack in Section~\ref{sec: chi-square} targets $\SRLWE(\cR,\fq)$ and hence, by the results of this section, will solve Search RLWE.  In Section~\ref{sec: sub-cyclotomics}, we demonstrate the attack on Search RLWE in practice.



\begin{definition} \label{def: srlwe mod q}
        Let $\cR = (K,q,\sigma, s)$ be an RLWE instance and let $\fq$ be a prime of $K$ lying above $q$.  The problem $\SRLWE(\cR, \fq)$ is to determine $s \pmod {\fq}$, given access to arbitrarily many independent samples $(a,b) \gets \cR$.
\end{definition}

We recall some facts from algebraic number theory in the following lemma.
\begin{lemma}
\label{lem: prime factorization}
Let $K/\bQ$ be a finite Galois extension of degree $n$ with ring of integers $R$,  and let $q$ be a prime unramified in $K$. Then there exists a unique divisor $g$ of $n$ and a set of $g$ distinct prime ideals $\fq_1, \cdots ,\fq_g$ of
$R$ such that:
\begin{enumerate}
        \item $qR = \prod_{i=1}^g \fq_i$,
        \item the quotient $R/\fq_{i}$ is a finite field of cardinality $q^f$  for each $i$, where $f = \frac{n}{g}$,
        \item there is a canonical isomorphism of rings
                \begin{equation}
                        \label{eqn: Rq-factor}
    R_q \cong R/\fq_{1} \times \cdots \times R/\fq_{g},
    \end{equation}
\item the Galois group acts transitively on the ideals $\fq_1, \ldots, \fq_g$ and this action descends to an action on $R_q$ which permutes the corresponding factors in \eqref{eqn: Rq-factor} in the same way.
\end{enumerate}
\end{lemma}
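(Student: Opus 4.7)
\medskip

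\noindent\textbf{Proof proposal.} The plan is to assemble standard results from algebraic number theory about the splitting of unramified primes in Galois extensions, and then track the Galois action through the Chinese Remainder isomorphism.

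First I would write $qR = \fq_1^{e_1} \cdots \fq_g^{e_g}$ as its prime factorization in $R$, which exists and is unique by Dedekind's theorem. Since $q$ is unramified, each $e_i = 1$. Since $K/\bQ$ is Galois, a classical theorem states that $\mathrm{Gal}(K/\bQ)$ acts transitively on the set $\{\fq_1, \ldots, \fq_g\}$; the proof uses strong approximation (or a direct argument: if $\fq$ and $\fq'$ both lie above $q$ but no Galois element sends $\fq$ to $\fq'$, one constructs an element of $R$ in $\fq$ but not in any conjugate of $\fq'$, contradicting $\fq \cap \bZ = q\bZ = \fq' \cap \bZ$). Transitivity together with the fact that each $\sigma \in \mathrm{Gal}(K/\bQ)$ induces an isomorphism $R/\fq_i \to R/\sigma(\fq_i)$ forces all residue fields $R/\fq_i$ to have a common cardinality $q^f$. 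The fundamental identity $\sum e_i f_i = n$ reduces here to $gf = n$, giving $f = n/g$; this establishes items (1) and (2).

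Next, for item (3), the prime ideals $\fq_1, \ldots, \fq_g$ are pairwise coprime (distinct maximal ideals), so the Chinese Remainder Theorem yields the ring isomorphism
\[
R/qR \;=\; R/\textstyle\prod_i \fq_i \;\xrightarrow{\;\sim\;}\; \prod_{i=1}^g R/\fq_i,
\]
given by the simultaneous reduction maps $x \mapsto (x \bmod \fq_1, \ldots, x \bmod \fq_g)$. This map is canonical in that it does not depend on any choices.

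For item (4), observe that $qR$ is fixed setwise by the Galois action (since $q \in \bZ$), so $\mathrm{Gal}(K/\bQ)$ acts on the quotient ring $R_q$. For any $\sigma \in \mathrm{Gal}(K/\bQ)$ and any $\fq_i$, the ideal $\sigma(\fq_i)$ is again one of the $\fq_j$, and $\sigma$ restricts to a ring isomorphism $R/\fq_i \to R/\fq_{\pi(i)}$, where $\pi$ is the induced permutation. Under the CRT isomorphism, an element $(x_1, \ldots, x_g)$ is sent by $\sigma$ componentwise according to this permutation; this is immediate from the naturality of CRT. Combined with the transitivity established above, this gives item (4).

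The only real content beyond bookkeeping is the transitivity of the Galois action on primes above $q$, which is the one non-trivial input; everything else is a direct consequence of CRT and the unramifiedness hypothesis, so I do not expect any serious obstacle.
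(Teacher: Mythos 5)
Your proof is correct, and it is the standard argument: the paper itself offers no proof of this lemma, simply recalling it as a collection of standard facts from algebraic number theory, and your assembly of Dedekind factorization, the transitivity of the Galois action on primes above $q$, the identity $efg=n$ with $e=1$, and the (natural) Chinese Remainder isomorphism is exactly the textbook justification one would supply. No gaps to report.
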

The number $f$ in the above lemma is called the {\it residue degree} of $q$ in $K$. Note that the prime $q$ splits completely in $K$ if and only if its residue degree is one.

\begin{theorem} \label{thm: reduction}
        Let $\cR = (K,q,\sigma, s)$ be an RLWE instance such that $K/\bQ$ is Galois of degree $n$ and $q$ is unramified in $K$ with residue degree $f$. Let $\sA$ be an oracle which solves $\SRLWE(\cR,\fq)$ using a list of $m$ samples modulo $\fq$.  Let $S$ be a set of $m$ RLWE samples in $R_q \times R_q$.  Then the problem $\SRLWE(\cR)$ can be solved using $S$ by $n/f$ calls to the oracle $\sA$, $2mn/f$ reductions $R_q \rightarrow R/\fq$, and $2mn/f$ evaluations of a Galois automorphism on $R_q$.
\end{theorem}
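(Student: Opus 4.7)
The plan is a Galois--CRT reduction. Label the primes of $R$ above $q$ as $\fq_1, \ldots, \fq_g$ (with $g = n/f$ by Lemma~\ref{lem: prime factorization}(2)) and set $\fq = \fq_1$ without loss of generality. Using the transitivity in Lemma~\ref{lem: prime factorization}(4), I would fix for each $i$ an element $\tau_i \in \mathrm{Gal}(K/\bQ)$ with $\tau_i(\fq) = \fq_i$ (taking $\tau_1 = \mathrm{id}$). The aim is to recover $s \bmod \fq_i$ for every $i$ using one oracle call per prime, and then assemble $s \in R_q$ via the CRT isomorphism of Lemma~\ref{lem: prime factorization}(3).

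Given $S = \{(a_j, b_j = a_j s + e_j)\}_{j=1}^m$, for each $i$ I would form
\[
S_i \;:=\; \bigl\{\bigl(\tau_i^{-1}(a_j),\, \tau_i^{-1}(b_j)\bigr)\bigr\}_{j=1}^m,
\]
reduce each coordinate modulo $\fq$, and hand the resulting list to $\sA$. Since $\tau_i^{-1}$ is a ring automorphism of $R$ that descends to $R_q$, the identity $\tau_i^{-1}(b_j) = \tau_i^{-1}(a_j)\cdot \tau_i^{-1}(s) + \tau_i^{-1}(e_j)$ shows that $S_i$ is an RLWE-style sample list for the secret $\tau_i^{-1}(s)$. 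Granted that $\sA$ returns $\tau_i^{-1}(s) \bmod \fq$, applying the ring isomorphism $R/\fq \to R/\fq_i$ induced by $\tau_i$ (well-defined because $\tau_i(\fq) = \fq_i$) yields $s \bmod \fq_i$. Running this for $i=1,\dots,g$ and invoking CRT reconstructs $s \in R_q$. The operation counts then match the statement: $g = n/f$ oracle calls; and, for each $i$, $2m$ Galois evaluations and $2m$ reductions $R_q \to R/\fq$, totaling $2mn/f$ of each.

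The step requiring the most care is verifying that the $\fq$-reduction of $S_i$ is a genuine $\SRLWE$ sample list for the secret $\tau_i^{-1}(s)$. Uniformity of $\tau_i^{-1}(a_j)$ on $R_q$ is automatic since $\tau_i^{-1}$ permutes $R_q$; the real content is that $\tau_i^{-1}(e_j)$ still has distribution $D_{\Lambda_R,\sigma}$. For this I would use that, in the adjusted canonical embedding of Section~\ref{sec: background}, any $\tau \in \mathrm{Gal}(K/\bQ)$ acts on $\bR^n$ as an orthogonal transformation. Indeed, because $K/\bQ$ is Galois, $K$ is either totally real or totally complex, so $\tau$ either permutes the real-embedding coordinates, or it permutes the $(\mathrm{Re}, \mathrm{Im})$-pairs of coordinates coming from complex embeddings while possibly negating the imaginary coordinate (according to whether $\sigma_i \circ \tau$ coincides with an $\sigma_j$ or with $\overline{\sigma_j}$). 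The lattice $\Lambda_R$ is stable under this action, and a spherical discrete Gaussian on a lattice is invariant under orthogonal transformations preserving the lattice, so $\tau_i^{-1}(e_j) \sim e_j$, as required.
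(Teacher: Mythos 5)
Your proposal is correct and follows essentially the same route as the paper's proof: use transitivity of the Galois action to send $\fq$ to each $\fq_i$, observe that applying $\tau_i^{-1}$ to the samples preserves the RLWE distribution because Galois acts as an isometry of the (adjusted) canonical embedding and the Gaussian is spherical, call the oracle on the $\fq$-reductions to get $\tau_i^{-1}(s) \bmod \fq$, push back via $\tau_i$, and finish by CRT. Your extra detail justifying orthogonality of the Galois action on the embedding is a fuller version of the paper's one-line remark, not a different argument.
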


\begin{proof}
        The Galois group $G =\operatorname{Gal}(K/\bQ)$ acts on the set $\{\fq_1, \cdots ,\fq_g\}$ transitively. Hence for each $i$, there exists $\sigma_i \in \operatorname{Gal}(K/\bQ)$, such that $\sigma_i(\fq) = \fq_i$.  
	Next, remark that $\sigma_i^{-1}(S)$ is itself a set of RLWE samples in $R_q \times R_q$, since the action of Galois is an isometry of the Minkowski embedding.  (Here it is essential that we consider only spherical Gaussians.)	
  Furthermore, the secret for this set of samples is $\sigma_i^{-1}(s)$.  	Then we call the oracle $\sA$ on the input $(\sigma_i^{-1}(S) \pmod \fq, \fq)$.  The algorithm will output $\sigma_i^{-1}(s) \pmod{\fq}$, from which we can recover $s \pmod{\fq_i}$ using $\sigma_i$.  We do this for all $1\leq i \leq g = n/f$ and use \eqref{eqn: Rq-factor} of Lemma \ref{lem: prime factorization} to recover $s$.
 \end{proof}

In particular, if the number of samples $m$ is polynomial in $n$ and the time taken to evaluate Galois automorphisms on a single sample is also polynomial in $n$, then Theorem~\ref{thm: reduction} gives a polynomial time reduction from $\SRLWE(\cR)$ to $\SRLWE(\cR,\fq)$.

\begin{remark}
        For a proper runtime analysis of the reduction, one must examine the implementation, in particular with regards to Galois automorphisms.
The runtime for evaluating an automorphism depends
rather strongly on the instance and on the way ring elements are represented. For example, for subfields of cyclotomic fields represented with respect to normal integral bases, the Galois automorphisms are simply permutations of the coordinates, so the time needed to apply these automorphisms is trivial.
\end{remark}


The search-to-decision reduction will follow from the lemma below.
\begin{lemma}
There is a probabilistic polynomial time reduction from $\SRLWE(\cR,\fq)$ to $\DRLWE(\cR)$.
\end{lemma}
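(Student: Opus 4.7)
The plan is to iterate through the $q^f$ candidates $g \in R/\fq$ and, for each one, use the decision oracle to test whether $g \equiv s \pmod{\fq}$. Under the standing assumption that $q^f$ is polynomially bounded, iterating is polynomial time, so it suffices to give a polynomial-time test for a single candidate.

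For the test I would rerandomize the given RLWE samples in a way that preserves the RLWE distribution when $g$ is correct and disturbs it only on the $\fq$-component otherwise. Writing $\fq = \fq_i$ in the CRT decomposition of Lemma~\ref{lem: prime factorization}, sample $v \in R_q$ uniformly subject to $v \equiv 0 \pmod{\fq_j}$ for all $j \neq i$ and $v \bmod \fq$ uniform in $R/\fq$. Fix any lift $g' \in R_q$ of $g$ and apply $(a,b) \mapsto (a+v,\, b + v g')$. A direct computation gives
\[
b + v g' \;=\; (a+v)\,s + e + v(g' - s),
\]
and $v(g'-s)$ vanishes modulo $\fq_j$ for $j \neq i$ while modulo $\fq$ it equals $v \cdot (g - s \bmod \fq)$. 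Hence if $g = s \bmod \fq$ the output is a fresh RLWE sample with the same secret $s$ and uniform first coordinate $a+v$, while if $g \neq s \bmod \fq$, using that $R/\fq$ is a field, the $\fq$-component of the second coordinate is uniform in $R/\fq$, destroying the RLWE structure on precisely that one CRT factor.

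The main obstacle is that for $g \neq s \bmod \fq$ the rerandomized output is \emph{not} globally uniform on $R_q \times R_q$: it remains RLWE on the other $g-1$ CRT factors, so the decision oracle, which only separates full RLWE from full uniform, cannot be applied to it naively. I would handle this with a standard hybrid argument. Define distributions $D_0, \ldots, D_g$ on $R_q \times R_q$ where $D_j$ is uniform on the factors $\fq_1,\ldots,\fq_j$ and RLWE on the rest, so that $D_0$ is the RLWE distribution and $D_g$ is uniform. If the oracle distinguishes $D_0$ from $D_g$ with advantage $\varepsilon$, then by telescoping it distinguishes some consecutive pair $(D_{j-1},D_j)$ with advantage at least $\varepsilon/g$. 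In the Galois setting of Section~\ref{sec: s-to-d}, the Galois action transitively permutes the CRT factors while preserving the RLWE distribution (cf.\ the proof of Theorem~\ref{thm: reduction}), so we may precompose by a suitable automorphism to arrange that the moving factor is exactly $\fq$. Combining this $\fq$-targeted distinguisher with the rerandomization above and amplifying by a Chernoff-bounded majority vote over $\mathrm{poly}(n)$ independently transformed samples then identifies the correct $g$ with overwhelming probability. The delicate bookkeeping I expect to have to carry out is (a) producing samples of each intermediate hybrid $D_j$ from genuine RLWE samples by injecting independent fresh uniform masks on the first $j$ CRT factors, and (b) tracking the $1/g$ advantage loss through the hybrid while keeping the sample complexity and running time polynomial.
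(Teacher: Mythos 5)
Your sketch reconstructs essentially the argument the paper invokes: its proof of this lemma is a one-line citation of Lemmas 5.9 and 5.12 of \cite{lyubashevsky2013ideal}, whose content is precisely your guess-and-rerandomize step targeting the $\mathfrak{q}$-factor followed by the hybrid over the CRT factors, with Galois automorphisms used to move the distinguished factor. The only standard ingredient you leave implicit is the worst-case-to-average-case re-randomization of the secret (mapping $(a,b) \mapsto (a, b + a\cdot t)$ for a known uniform $t$), which is needed so that the decision oracle's advantage still applies after precomposing with an automorphism (which changes the secret); with that included, your route is the same as the cited proof.
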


\begin{proof}
This is a rephrasing of \cite[Lemma 5.9 and Lemma 5.12]{lyubashevsky2013ideal}.\end{proof}

\begin{corollary}\label{cor: s-to-d}
Suppose $\cR$ is an RLWE instance where $K$ is Galois and $q$ is an unramified prime in $K$. Then there is a probabilistic polynomial-time reduction from $\SRLWE(\cR)$ to $\DRLWE(\cR)$.
\end{corollary}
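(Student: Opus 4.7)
The plan is simply to compose the two preceding results in this section: Theorem~\ref{thm: reduction}, which gives a reduction from $\SRLWE(\cR)$ to the intermediate problem $\SRLWE(\cR,\fq)$, and the lemma just above, which gives a probabilistic polynomial-time reduction from $\SRLWE(\cR,\fq)$ to $\DRLWE(\cR)$. Under the hypotheses of the corollary ($K/\bQ$ Galois and $q$ unramified), both reductions apply, so chaining them yields the desired reduction from $\SRLWE(\cR)$ to $\DRLWE(\cR)$.

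Concretely, I would proceed as follows. Given a $\DRLWE(\cR)$ oracle, first apply the lemma to construct a probabilistic polynomial-time algorithm $\sA$ that solves $\SRLWE(\cR,\fq)$ for the fixed prime $\fq$ lying above $q$; note that this uses some polynomial number $m = m(n)$ of samples per invocation. Then invoke Theorem~\ref{thm: reduction}: draw a set $S$ of $m$ fresh RLWE samples, and for each of the $g = n/f$ primes $\fq_i$ above $q$ choose a Galois element $\sigma_i$ with $\sigma_i(\fq) = \fq_i$, apply $\sigma_i^{-1}$ coordinatewise to $S$, reduce modulo $\fq$, and call $\sA$ on the resulting samples to obtain $\sigma_i^{-1}(s) \bmod \fq$, which recovers $s \bmod \fq_i$. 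The Chinese remainder isomorphism of Lemma~\ref{lem: prime factorization}(3) then assembles $s \in R_q$ from these components. The total cost is $n/f$ invocations of a polynomial-time oracle together with $O(mn/f)$ Galois actions and modular reductions, which is polynomial in $n$.

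The main point to check is that every piece of this composition is actually polynomial in the size of the instance. The number of primes $\fq_i$ is $g = n/f \le n$, each call to $\sA$ is polynomial time by the lemma, and the modular reductions $R_q \to R/\fq$ are cheap. The only subtlety is that evaluating a Galois automorphism on a representative in $R_q$ must be done in polynomial time; this is a question about the representation of $K$ and is standard (for instance, for subfields of cyclotomic fields expressed in a normal integral basis the automorphisms are just coordinate permutations, as observed in the remark following Theorem~\ref{thm: reduction}), so I would simply invoke this under the usual convention that $K$ is given by data for which field operations and Galois actions are polynomial time. With that in place, the composition is manifestly a probabilistic polynomial-time reduction, completing the proof.
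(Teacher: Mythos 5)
Your proposal is correct and matches the paper's (implicit) argument: the corollary is obtained exactly by chaining Theorem~\ref{thm: reduction} with the preceding lemma reducing $\SRLWE(\cR,\fq)$ to $\DRLWE(\cR)$, under the Galois and unramified hypotheses. Your added care about the polynomial-time evaluation of Galois automorphisms is consistent with the remark following Theorem~\ref{thm: reduction}.
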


\section{The Chi-square Attack}
\label{sec: chi-square}

In this section, we extend the $f(1) \equiv 0 \pmod q$ attack of \cite{eisentrager2014weak} and the root-of-small-order attack of \cite{elos2015weak}.  These attacks can be viewed as examples of a more general  principle, as follows.  Suppose one has a ring homomorphism
\[
        \phi: R_q \rightarrow F
\]
where $F$ is a finite field, and where two properties hold:
\begin{enumerate}
        \item $F$ is small enough that its elements can be examined exhaustively; and
        \item the error distribution on $R_q$, transported by $\phi$ to $F$, is detectably non-uniform.
\end{enumerate}

Then the attack on decision RLWE  is as follows:
\begin{enumerate}
        \item Transport the samples $(a, b)$ in $R_q \times R_q$ to $F \times F$ via $\phi$.
        \item Loop through possible guesses for the image of the secret, $\phi(s)$, in $F$.
        \item For each guess $g$, compute the distribution of $\phi(b) - \phi(a)g$ on the available samples. Note that if we 
        let $g^* = \phi(s)$ denote the true value, 
        \[
        		\phi(b) - \phi(a) g = (\phi(b) - \phi(a) g^*)  + \phi(a)(g-g^*) = \phi(e) +  \phi(a)(g-g^*), 
        \]
        which equals $\phi(e)$ if the guess is correct, and looks uniform otherwise. 
        \item If the samples are RLWE samples with secret $s$ and $g = \phi(s)$, then this distribution will follow the error distribution, which will look non-uniform.
        \item If all such distributions look uniform, then the samples were uniform, not RLWE, samples.
\end{enumerate}

The fact that $\phi$ is a ring homomorphism is essential in guaranteeing that for the correct guess, the distribution in question is the image of the error distribution.  The only ring homomorphisms from $R_q$ to a finite field are given by reduction modulo a prime ideal $\fq$ lying above $q$ in $R$.

\subsection{Chi-square Test for Uniform Distribution}
We briefly review the properties and usage of the chi-square test for uniform distributions over a finite set $S$. We partition $S$ into $r$ subsets $S = \bigsqcup_{j=1}^r S_j$, called \emph{bins}.
Suppose there are $M$ samples $y_1, \ldots, y_M \in S$.
For each $1 \leq j \leq r$, we compute the expected number of samples in the $j$-th subset: $c_j := \frac{|S_j|M}{|S|}$. Then we compute the actual number of samples in $S_j$, i.e., $t_j := |\{1 \leq i \leq r: y_i \in S_j\}|$. Finally, the $\chi^2$ value is computed as
\[
    \chi^2(S,y) = \sum_{j = 1}^r \frac{(t_j -c_j)^2}{c_j}.
\]
Suppose the samples are drawn from the uniform distribution on $S$. Then the $\chi^2$ value follows the chi-square distribution with $(r-1)$ degrees of freedom, which we denote by $\chi_{r-1}^2$. Let $\cF_{r-1}(x)$ denote its cumulative distribution function. For the chi-square test, we choose a confidence level parameter $\alpha \in (0,1)$ and compute $\delta = \cF_{r-1}^{-1}(\alpha)$. Then we reject the hypothesis that the samples are drawn from the uniform distribution if $\chi^2(S,y)  > \delta$.

If $P,Q$ are two probability distributions on the set $S$, then their {\it statistical distance} is defined as
$d(P,Q) = \frac{1}{2} \sum_{t \in S} |P(t) - Q(t)|$. For convenience, we also define the {\it $l_2$ distance} between $P$ and $Q$ as $d_2(P,Q) = (\sum_{t \in S} |P(t) - Q(t)|^2)^{\frac{1}{2}}$. We have the inequality $d(P,Q) \leq \frac{\sqrt{|S|}}{2}d_2(P,Q)$.

\begin{remark}
We chose to use the chi-square test for our attack since we are distinguishing a known distribution (uniform on $R/\fq$) from an unkown distribution (discrete Gaussians mod $\fq$). If the latter distribution is efficiently computable, then one might switch to other statistical tests, e.g., the Neyman-Pearson test, for better results.
\end{remark}

\subsection{The Chi-square Attack on $\SRLWE(\cR,\fq)$}

Let $\cR$ be an RLWE instance with error distribution $D_{\Lambda_R,\sigma}$ and $\fq$ be a prime ideal above $q$.  The basic idea of our attack relies on the assumption that the distribution $D_{\Lambda_R,\sigma} \mod {\fq}$ is distinguishable from the uniform distribution on the finite field $F = R/\fq$. More precisely, the attack loops through all $q^f$ possible values $\bar{s} = s \pmod{\fq}$, and for each guess $s'$, it computes the values $\bar{e}' = \bar{b} - \bar{a} s' \pmod {\fq}$ for every sample $(a,b) \in S$. If the guess is wrong, or if the samples are taken from the uniform distribution in $(R_q)^2$, the values $\bar{e}'$ would be uniformly distributed in $F$ and it is likely to pass the chi-square test. On the other hand, if the guess is correct, then we expect the test on the errors $\bar{e}'$ to reject the null hypothesis. Let $N := q^f$ denote the cardinality of $F$. We remark that as a general rule of thumb for the 
chi-square test, we need to generate at least $5N$ samples. 

For the attack to be successful, we need the $(N-1)$ tests corresponding to wrong guesses of $s \pmod{\fq}$ to pass, and the one test corresponding to the correct guess to be rejected. For this purpose, we need to choose the confidence level $\alpha$ to be close enough to one (a reasonable choice is $\alpha = 1 - \frac{1}{10N}$). The detailed attack is described in Algorithm~\ref{alg: chi-square}.  Let $\cF_{N-1}(x)$ denote the cumulative distribution function of $\chi_{N-1}^2$.


\begin{algorithm}
\caption{chi-square attack on $\SRLWE(\cR,\fq$)}
 \label{alg: chi-square}        
\begin{algorithmic} 
    \Require  $\cR = (K,q,\sigma, s)$ -- an RLWE instance; $R$ -- the ring of integers of $K$; $\fq$ -- a prime ideal in $K$ above $q$; $F = R/\fq$ -- the residue field of $\fq$; $N = q^{f}$ -- the cardinality of $F$; $\mathcal{S}$ -- a collection of $M$ ($M = \Omega(N)$) RLWE samples from $\cR$; $0 < \alpha < 1$ -- the confidence level.
    \Ensure a guess of the value $s \pmod{\fq}$, or {\bf NOT-RLWE}, or {\bf INSUFFICIENT-SAMPLES}
    \State $\delta \gets \cF_{N-1}^{-1}(\alpha)$, $\cG \gets \emptyset$.
    \For{$s$ in $F$}
        \State $\cE \gets \emptyset$.
        \For{$a,b$ in $\mathcal{S}$}
            \State $\bar{a}, \bar{b} \gets a \pmod{\fq}, b \pmod{\fq}$.
            \State $\bar{e} \gets \bar{b} - \bar{a}s$.
            \State add $\bar{e}$ to $\cE$.
        \EndFor

        \State     $\chi^2(\cE) \gets \sum_{j = 1}^N \frac{(|\{c \in \cE: c = j\}|  - M/N)^2}{M/N}$.

        \If{$\chi^2(\cE) >  \delta$}
            \State add $s$ to $\cG$.
        \EndIf
    \EndFor
    \If{$G = \emptyset$}

        \Return {\bf NOT-RLWE}
    \ElsIf{$G = \{g\}$}

        \Return $g$
    \Else

        \Return {\bf INSUFFICIENT-SAMPLES}
    \EndIf

\end{algorithmic}
\end{algorithm}

\begin{remark}
For simplicity of exposition, we use $N$ bins in Algorithm~\ref{alg: chi-square}, that is one element per bin. In some situations,
it might be advantageous to choose the bins differently.
\end{remark}

The time complexity of the attack is $O(nq^{2f})$ since there are $q^{f}$ possible values for $s \pmod {\fq}$, each reduction modulo $\fq$ takes $O(n)$ to compute, and the number of samples needed is $O(q^f)$. The correctness of the attack is captured in Theorem~\ref{thm: attack} below. For $\lambda \in \bR$ and $d \in \bZ$, we use $\cF_{d,\lambda}(x)$ to denote the cumulative distribution function of the noncentral chi-square distribution with degree of freedom $d$ and parameter $\lambda$.

\begin{theorem} \label{thm: attack}
Let $\cR  = (K,q,s, \sigma)$ be an RLWE instance. Suppose $\fq$ be a prime ideal in $K$ above $q$, and let $\Delta$ denote the statistical distance between the distribution $D_{\Lambda_R, \sigma} \mod{\fq}$ and the uniform distribution on $R/\fq$.  Let $M$ be the number of samples used in Algorithm~\ref{alg: chi-square}, and let $\lambda = 4 M \Delta^2$. Let $0 < \alpha < 1$ and let $\delta = \cF_{N-1}^{-1}(\alpha)$. If $p$ is the probability of success of the attack in Algorithm~\ref{alg: chi-square}, then

$$p  \geq \alpha^{N-1} (1-  \cF_{N-1; \lambda}(\delta) ).$$
\end{theorem}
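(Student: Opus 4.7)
The plan is to decompose ``success of Algorithm~\ref{alg: chi-square}'' into two events that can be analyzed separately: (i) every one of the $N-1$ chi-square tests run with a wrong guess $s' \not\equiv s \pmod{\fq}$ produces a statistic at most $\delta$; and (ii) the single test run with the correct guess produces a statistic exceeding $\delta$. I will bound each probability and then combine.

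First I would work out the per-sample distribution of $\bar{e}' := \bar{b} - \bar{a} s' \pmod{\fq}$ under a fixed guess $s'$. Writing $\bar{b} = \bar{a}\bar{s} + \bar{e}$ with $\bar{s} = s \bmod \fq$, one obtains $\bar{e}' = \bar{a}(\bar{s} - s') + \bar{e}$. When $s' \neq \bar{s}$, the factor $\bar{s} - s'$ is invertible in the field $F = R/\fq$, so $\bar{a}(\bar{s}-s')$ is uniform on $F$ independently of $\bar{e}$, and hence $\bar{e}'$ is uniform on $F$. For the correct guess, $\bar{e}' = \bar{e}$ follows the law $p := D_{\Lambda_R,\sigma} \bmod \fq$ by construction.

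Next I would apply the classical asymptotics of Pearson's chi-square statistic. For a wrong guess, the $M$ i.i.d.\ uniform values on $F$ yield a statistic that is asymptotically $\chi^2_{N-1}$-distributed, so each individual wrong-guess test passes with probability $\alpha$ by the choice $\delta = \cF_{N-1}^{-1}(\alpha)$; treating the $N-1$ wrong-guess tests as independent (the standard heuristic for chi-square guess-and-check attacks of this type) yields $\alpha^{N-1}$ as the probability that all of them pass. For the correct guess, Pearson's statistic under the alternative $p$ is asymptotically noncentral chi-square with $N-1$ degrees of freedom and noncentrality parameter $\lambda' = MN\|p-u\|_2^2$, where $u \equiv 1/N$ is the uniform law on $F$.

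Finally I would relate the noncentrality to $\Delta$ via Cauchy--Schwarz: $(2\Delta)^2 = \bigl(\sum_{t\in F}|p(t)-u(t)|\bigr)^2 \leq N\sum_{t\in F}(p(t)-u(t))^2$, so $\lambda' \geq 4M\Delta^2 = \lambda$. Because noncentral chi-square distributions are stochastically increasing in the noncentrality parameter, $\cF_{N-1;\lambda'}(\delta) \leq \cF_{N-1;\lambda}(\delta)$, hence the correct-guess rejection probability is at least $1 - \cF_{N-1;\lambda}(\delta)$. Multiplying the two event probabilities gives the stated lower bound. The main obstacle is that all $N$ tests reuse the same samples $(a,b)$, so the statistics are not genuinely independent; I would either invoke the independence as the standard heuristic for chi-square sieves, or argue that for the one-sided inequality actually needed here the correlations go in a favorable direction.
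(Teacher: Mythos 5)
Your proposal is correct and follows essentially the same route as the paper: the paper likewise splits success into the $N-1$ wrong-guess tests passing (probability $\alpha^{N-1}$ under the same explicit independence assumption you flag) and the correct-guess test rejecting, uses the standard fact that the correct-guess statistic is noncentral chi-square with noncentrality $MN\,d_2^2$, lower-bounds this by $\lambda = 4M\Delta^2$ via the inequality $d \leq \tfrac{\sqrt{N}}{2} d_2$ (your Cauchy--Schwarz step), and concludes by monotonicity of the noncentral chi-square tail in the noncentrality parameter. Your added verification that wrong guesses yield uniform reduced errors (invertibility of $\bar{s}-s'$ in $R/\fq$) is stated informally in the paper before the algorithm rather than inside the proof, but it is the same argument.
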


\begin{proof}
It is a standard fact (see \cite{ryabko2004new}, for example) that the chi-square value on samples from $D_{\Lambda_R, \sigma} \mod{\fq}$ follows the noncentral chi-square distribution with $(N-1)$ degrees of freedom and parameter $\lambda_0$ given by
\[
    \lambda_0 =  d_2(D_{\Lambda_R, \sigma} \Mod{\fq}, U(R/\fq))^2 \cdot MN.
\]
Note that we have $\lambda_0 \geq (2d(D_{\Lambda_R, \sigma} \mod{\fq}, U(R/\fq))/\sqrt{N})^2 MN = 4M\Delta^2 = \lambda$. Recall that our attack succeeds if the ``error'' set $\cE$ from each of the $(N-1)$ wrong guesses of $s \pmod{\fq}$ passes the test, and the true reduced errors fail the test. We assume that the results of these tests are independent of each other. Then the first event happens with probability $\alpha^{N-1}$, whereas the second event has probability  $1-  \cF_{N-1; \lambda_0}(\delta)$. Since this is an increasing function in $\lambda_0$, we can replace $\lambda_0$ by $\lambda$, and the theorem follows.
 \end{proof}

\begin{remark}
One could choose the value of $\alpha$ in Theorem~\ref{thm: attack} to suit the specific instance.
The probability of success will change accordingly. When we expect the statistical distance $\Delta$ to be large, it is preferable to choose a larger $\alpha$ to increase the probability of success.  For example, if we choose $\alpha = 1-\frac{1}{10N}$, then $\alpha^{N-1} \geq e^{-1/10} = 0.904 \cdots$.
\end{remark}

Figure~\ref{fig: probability} shows a plot of $p$ versus $\Delta$ for various choices of $N$, made according to Theorem~\ref{thm: attack}, where we fix the number of samples to be $M = 5N$ and fix $\alpha = 1-\frac{1}{10N}$.
\begin{figure}
\begin{center}
\includegraphics[width = 0.60\textwidth]{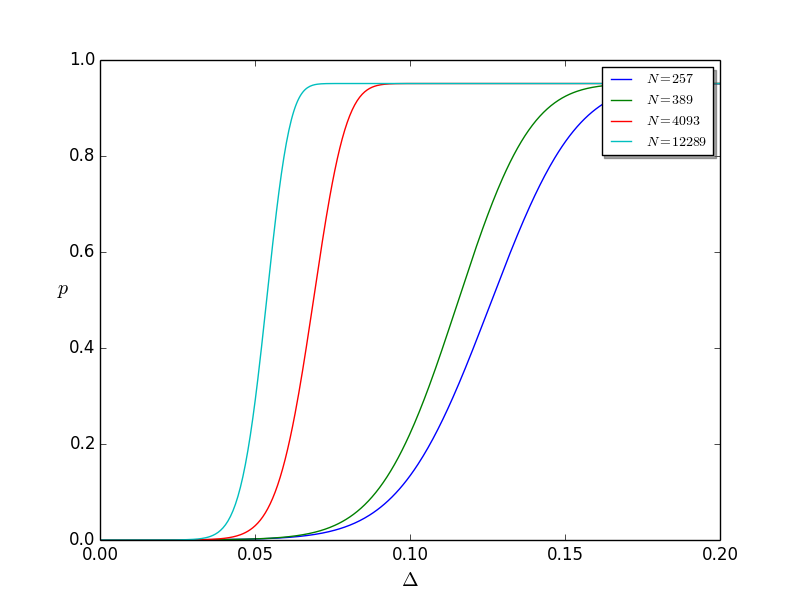}
\caption{Success probability versus statistical distance}
\label{fig: probability}
\end{center}
\end{figure}

\begin{remark}
For linear equations with small errors, there is the attack on the search RLWE problem proposed by Arora and Ge \cite{arora2011new}. However, the attack requires solving a linear system in $\approx n^d/d!$ variables. Here $d$ is the number of possible values for the error: e.g., if the error can take values $0, 1, 2, -�1, -��2$, then $d = 5$. Since it requires $\approx n^d/d!$ samples, the attack of Arora and Ge requires $\geq 10^8$ samples when $n \geq 100$ and $d \ge 5$, for example. In contrast, the complexity of our attack depends linearly on $n$ and quadratically on $q$. In particular, it does not depend on the error size (although the success rate does depend on the error size).
\end{remark}

\section{Vulnerable Instances among Subfields of Cyclotomic Fields}\label{sec: sub-cyclotomics}
We searched for instances of RLWE vulnerable to the chi-square attack.  For this purpose, we restricted attention to subfields of cyclotomic fields $\bQ(\zeta_m)$. Throughout this section, we assume $m$ is a positive integer that is {\it odd and squarefree}. The Galois group $\Gal(\bQ(\zeta_m)/\bQ)$ is canonically isomorphic to $G = (\bZ/m\bZ)^*$. For a subgroup $H$ of $G$, let $K_{m,H} = \bQ(\zeta_m)^H$ be the subfield of elements fixed by $H$.
Then the extension $K_{m,H}/\bQ$ is Galois with degree $n = \frac{\varphi(m)}{|H|}$. Also, the residue degree of a prime $q$ in $K_{m,H}$ is equal to the order of $q$ in the quotient group $G/H$. Moreover, $K_{m,H}$ has a canonical {\it normal integral basis}, as follows. For each integer $i$ coprime to $m$, set $w_{i} =  \sum_{h \in H} \zeta_m^{hi}$. Then  $w := (w_{i})_{i \in G/H}$ is a $\bZ$-basis of $R$. (For a proof of this fact, see \cite[Proposition 6.1]{johnston2011notes}). Thus we have $A_w = TA_w'$, where \[
T = \begin{bmatrix}
    I_{r_1}  & 0 & 0  \\
    0     & \frac{1}{\sqrt{2}} I_{r_2}  &  \frac{1}{\sqrt{2}} I_{r_2} \\
    0  &    \frac{-i}{\sqrt{2}} I_{r_2}     & \frac{i}{\sqrt{2}}I_{r_2}
\end{bmatrix}
, \, \quad
    (A_w')_{ij} = \sum_{h \in H}{\zeta_m^{hij}}, \mbox{ for } i, j \in G/H.
\]
\begin{remark}
The field $K_{m,H}$ is totally real if and only if $-1 \in H$, in which case $(r_1, r_2) = (n,0)$. Otherwise, it is totally
complex, and $(r_1,r_2)  =(0,n/2)$.
\end{remark}

\begin{lemma} \label{lem: symmetry}
Suppose $\cR$ is an RLWE instance such that the underlying field $K$ is a Galois number field and that $q$ is unramified in $K$. Then the reduced error distribution $D_{\Lambda_R,\sigma} \mod{\fq}$ is independent of the choice of prime ideal $\fq$ above $q$.
\end{lemma}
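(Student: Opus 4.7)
The plan is to prove this via the transitive action of the Galois group on primes above $q$, combined with the fact that Galois acts by orthogonal transformations on the canonical embedding, leaving the spherical discrete Gaussian invariant.

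First I would fix two primes $\fq, \fq'$ of $R$ above $q$. Since $K/\bQ$ is Galois and $q$ is unramified, Lemma~\ref{lem: prime factorization}(4) gives a $\tau \in \Gal(K/\bQ)$ with $\tau(\fq) = \fq'$. This $\tau$ then induces a ring isomorphism $\overline{\tau}: R/\fq \to R/\fq'$ which will be the canonical identification under which the two reduced distributions should agree.

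Second, I would verify that the Galois action on $K$ transports via the adjusted canonical embedding $\iota$ to an orthogonal transformation of $\bR^n$ preserving $\Lambda_R$. The point is that $\tau$ permutes the set $\{\sigma_1,\ldots,\sigma_n\}$ of complex embeddings, commuting with complex conjugation; therefore its action on the coordinates of $\iota$ either permutes pairs of real coordinates or acts on a pair $(\sqrt{2}\Re\sigma_{r_1+j}, \sqrt{2}\Im\sigma_{r_1+j})$ either by swapping it with another such pair, negating the imaginary part (if $\tau$ sends $\sigma_{r_1+j}$ to $\overline{\sigma_{r_1+k}}$), or fixing it. In every case this is an orthogonal map, so $\iota\circ\tau\circ\iota^{-1}$ extends to an element of $O(n)$ preserving the lattice $\Lambda_R$.

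Third, since the Gaussian function $\rho_\sigma(x) = e^{-\|x\|^2/2\sigma^2}$ depends only on $\|x\|$, it is invariant under orthogonal transformations, and hence so is the discrete Gaussian on $\Lambda_R$. Therefore, if $e$ is drawn from $D_{\Lambda_R,\sigma}$, then $\tau(e)$ has the same distribution. Pushing the distribution of $e \bmod \fq$ forward by $\overline{\tau}$ gives the distribution of $\tau(e) \bmod \fq'$, which equals the distribution of $e \bmod \fq'$. This shows the two reduced distributions agree under the canonical Galois identification, proving the lemma.

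The main subtlety is step two: one must check carefully that because of the factors of $\sqrt{2}$ in the adjusted canonical embedding and the paired real/imaginary coordinates, Galois really does act by an element of $O(n)$ rather than by some non-orthogonal linear map. Once this is in hand, the rest is essentially a formal consequence of the rotational symmetry of spherical Gaussians.
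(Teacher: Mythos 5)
Your proposal is correct and follows essentially the same route as the paper: use transitivity of the Galois action on the primes above $q$ (Lemma~\ref{lem: prime factorization}) and the fact that Galois acts on $\Lambda_R$ by isometries, so the spherical discrete Gaussian is preserved and the reduced distributions agree under the induced identification of residue fields. Your step two is in fact slightly more careful than the paper's proof, which asserts that Galois ``permutes the coordinates''; strictly speaking, complex conjugation of an embedding negates an imaginary coordinate, so the action is a signed permutation, but as you note it is still orthogonal and the argument goes through unchanged.
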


\begin{proof}
        From Lemma \ref{lem: prime factorization}, we may switch from a prime $\fq$ to $\fq'$ via $\Gal(K/\bQ)$. On the other hand, the Galois group acts on the lattice $\Lambda_R$ by permuting the coordinates. Hence we have a group homomorphism $$\phi: \Gal(K/\bQ) \to \Aut(\Lambda).$$ Since permutation matrices are orthogonal, the Galois group action on $\Lambda_R$ given by $\phi$ is distance-preserving. In particular, it preserves any spherical discrete Gaussian distribution on $\Lambda_R$.
\end{proof}

\subsection{Searching for Vulnerable Instances}

Algorithm~\ref{alg: chi-square} allows us to search for vulnerable instances among fields of the form $K_{m,H}$ by generating actual RLWE samples and running the attack. Success of the attack will indicate vulnerability of the instance. Note that our field searching requires sampling efficiently from a discrete Gaussian $D_{\Lambda, \sigma}$, for which we use the efficient algorithm developed in \cite{gentry2008trapdoors}.

In Table~\ref{tab: attacked}, we list some instances on which the attack has succeeded. The columns of Table~\ref{tab: attacked} are as follows. The first two columns specify $m$ and the generators of $H$, where $H$ is represented as
a subgroup of $(\bZ/m\bZ)^*$; the column labeled $f$ is the residue degree of $q$. The last column consists of either the runtime for an actual attack which succeeded, or an estimation of the runtime. Note that we omitted our choice of prime ideal $\fq$, since due to Lemma~\ref{lem: symmetry} the choice of $\fq$ is irrelevant to our attack. The parameters $\sigma_0$ in Table~\ref{tab: attacked} represent the boundary of the power of our attack, i.e., we tried higher $\sigma_0$ and the attack failed. Note that although $\sigma_0$ is relatively small, in practice it still provides exponentially many error vectors. Intuitively, when $\sigma_0$ = 1, our $\sigma$ is equal to the geometric mean of the lengths of a Gram-Schmidt basis of $\Lambda_R$. In practice, the lengths of these basis vectors do not differ by a lot, so we still expect to get at least $\Omega(2^n)$ error vectors.

Also, in terms of the normal integral basis of $R$, the coefficients of the error $e$ are of the same size. In particular, none of the coefficients will be zero with overwhelming probability. Thus a standard linear algebra attack does not apply to this case.

The rows of Table~\ref{tab: attacked} with ``estimated'' runtime mean the following. First, we ran the chi-square test on the correct reduced errors to obtain an estimate $\hat{\Delta}$ of the statistical distance $\Delta$. We then chose $\alpha$ according to $\hat{\Delta}$ and obtained an estimation $\hat{p}$ of the success probability of our attack, using the formula in Theorem~\ref{thm: attack}.  The corresponding rows in the table all have $\hat{p} > 1 - 2^{-10}$, suggesting that the attack is very likely to succeed.  Finally, we ran a few chi-square tests on samples obtained from a few randomly chosen incorrect guesses to compute the average time $t$ for running one chi-square test. We set the estimated runtime for the attack to be $tN$.

\begin{table}
\caption{Attacked sub-cyclotomic RLWE instances}
\label{tab: attacked}
\begin{center}
\begin{tabular}{c|c|c|c|c|c|c|c}
$m$ & generators of $H$ & $n$ & $q$ & $f$ & $\sigma_0$ & no. samples & runtime (in hours) \\ \hline
2805 &  [1684, 1618] & 40 & 67 & 2 & 1 & 22445 & 3.49 \\
15015 & [12286, 2003, 11936] & 60 & 43 & 2 & 1 & 11094 & 1.05 \\
15015 & [12286, 2003, 11936] & 60 & 617 & 2 & 1.25 & 8000 & 228.41 (estimated)  \\
90321 & [90320, 18514, 43405] & 80 & 67 & 2 & 1 & 26934 & 4.81 \\
255255 &  [97943, 162436, 253826, 248711, 44318] & 90 & 2003 & 2 & 1.25 & 15000 &  1114.44 (estimated) \\
285285 & [181156, 210926, 87361] & 96 & 521  & 2 & 1.1 & 5000 & 75.41 (estimated) \\
1468005 & [312016, 978671, 956572, 400366] & 100 & 683 & 2 & 1.1
& 5000 &  276.01 (estimated) \\
1468005 & [198892, 978671, 431521, 1083139] & 144 & 139 & 2 & 1 &  4000 &  5.72
\end{tabular}
\end{center}
\end{table}


\begin{remark}
Castryk et al~\cite{castryck2016provably} show that there are even more weaknesses in the weak instances found in~\cite{elos2015weak} so that one can attack the corresponding search RLWE problem using a standard linear algebra attack using only a few samples.  The approach  from~\cite{castryck2016provably} using linear algebra will not work on the examples in this paper.  Although each coordinate of the error vector only takes on small integers, it is unlikely that any fixed coordinate of the error vector will equal to zero. Hence one can not hope to extract exact linear equations from the samples. 

Nonetheless, in~\cite{castryck2016error} Castryck et al. performed an analysis of the instances in our Table~\ref{tab: attacked}, and showed that one can recover a certain number of approximate linear equations from each RLWE sample. One can certainly run the Arora-Ge attack using these approximate equations. However, a careful analysis shows that for instances in our table, our attack is more efficient than the Arora-Ge attack. 

For example, we take the first instance from Table~\ref{tab: attacked}. In Section 5 of~\cite{castryck2016error},  it is shown that out of each RLWE sample, one can recover 20 noisy linear equations in the secret key, with each noise sampled from a Gaussian of mean zero and standard deviation 0.5381. (In \cite{castryck2016error}, the standard deviation was incorrectly claimed to be $0.5381/\sqrt{2\pi}$ due to a misunderstanding: they used $r = 1$ in their analysis but our instance has $r = \sqrt{2\pi}$.) 

First, we try $d = 7$. In order to 
run the Arora-Ge attack, we need in the best case 
$({{40+ 7-1}\choose{7}} / 20) \approx 2^{21}$ RLWE samples, assuming all errors after rounding to integers lies in [-3,3]. If we choose $d = 5$ instead, then in the best case we need $({{40+ 5-1}\choose{5}} / 20) \approx 54300$ RLWE samples. However, to achieve this, we need the rounded  errors in each equation to lie in [-2,2], which happens  with probability $\erf(\frac{2.5}{\sqrt{2}\cdot 0.5381})^{{40+ 5-1}\choose{5}} \approx 0.025$. Our attack on the other hand requires 22445 samples and succeeds with probability greater than 1/2. Moreover, the computational complexity of Arora-Ge attack is cubic in the number of samples, while our attack is linear in the number of samples. Hence we conclude that our attack is more efficient. A similar analysis can be done for other instances in Table~\ref{tab: attacked}.

\end{remark}

\subsection{Discussion of the Reason for Vulnerability} \label{subfield vulnerability}

We searched for vulnerable instances where the modulus has residue degree one or two. It turns out that all vulnerable instances we found and listed in Table~\ref{tab: attacked} have a modulus of degree two.  In this section we give a heuristic explanation for the existence of examples of higher degree.  
Let $K$  be a Galois number field and suppose $q$ is a prime of residue degree $f$ in $K$.  We will give a scenario under which a vulnerability to our attack may appear.

For the purposes of the thought experiment, we will suppose there exists a ``good'' integral basis $w_1,\cdots, w_n$ of the ring of integers $R$, by which we mean that the vectors $\iota(w_i)$ and $\iota(w_j)$ are almost orthogonal and short for $i \neq j$; this is only for convenience in the discussion. 
Fix a prime ideal $\fq$ above $q$. Then the images of the basis under the reduction modulo $\fq$ map are elements of $F := R/\fq$. Now if for some index $i$, the element $w_i$ lies inside some proper subfield $K'$ of $K$, and if $q$ has residue degree $f' < f$ in $K'$, then $w_i \pmod{\fq}$ will lie in a proper subfield of $F$. \emph{If this occurs for a large number of the basis elements $w_i$}, then we could  expect the distribution $D_{\Lambda_R, \sigma} \mod{\fq}$ to take values in a proper subfield of $F$ more frequently than the uniform distribution. This would allow us to distinguish it from the uniform distribution on $F$.

In practice, we found that the above scenario is particularly likely when the field $K$ has a subfield $K'$ of index 2 such that $q$ splits completely in $K'$ and has residue degree 2 in $K$. Since the ring of integers of $K'$ is a subring of the ring of integers of $K$, one has at least $n/2$ linearly independent vectors in $\Lambda_R$ with the desired property, i.e., their reduction modulo some prime $\fq$ above $q$ lie inside $\bF_q$ instead of $\bF_{q^2}$.

\subsection{A Detailed Example} \label{subsec: detailed}
In order to illustrate our discussion above together with the search-to-decision reduction, we present a vulnerable Galois  instance in detail, where we generated RLWE samples, performed the attack, and used the search-to-decision reduction to recover the entire secret $s$.

\begin{example}
Let $m = 3003$ and $H$ be the subgroup of $(\bZ/m\bZ)^*$ generated by 2276, 2729 and 1123. Then $K = K_{m,H}$ is a Galois number field of degree $n = 30$. We take the modulus to be $q = 131$, a prime of residue degree 2, and take $\sigma_0 = 1$. We generate the secret $s$ from the discrete Gaussian $D_{\Lambda_R, \sigma}$.  There are 15 prime ideals in $K$ lying above $q$, which we denote by $\fq_1, \cdots, \fq_{15}$. We then generate 1000 RLWE samples and use Algorithm~\ref{alg: chi-square} and Theorem~\ref{thm: reduction} to recover $s \pmod {\fq_i}$ for each $1 \leq j \leq 15$. Then we use the Chinese remainder theorem to recover $s$. The attack succeeded in 32.8 hours. The code for this attack is in the appendix.
\end{example}

\section{Attacks on the Prime Cyclotomic Fields}
\label{sec: ramified-prime}

\subsection{Attacking non-dual RLWE when $q = p$}
Let $p$ be an odd prime and let $K = \bQ(\zeta_p)$ be the $p$-th cyclotomic field. Then $K$ has degree $(p-1)$ and discriminant $p^{p-2}$. The prime $p$ is totally ramified in $K$, so there is a unique prime ideal $\fp = (1 - \zeta_p)$ above $p$, and the reduction from $R/pR$ to $R/\fp R \cong \bF_p$ takes all powers of $\zeta_p$ to 1.

We give a heuristic argument that the attack could work: writing the error $e$ as  $\sum e_i \zeta_m^i$, we have $e \pmod{\fp} = \sum_i e_i$. Since the coefficients $e_i$ tend to be small, it may be that  $e \pmod{\fp}$ takes on small values with higher probability, making the instance vulnerable to our chi-square attack. Table~\ref{tab: ramified} contains data of some actual attacks we have done. Note that the parameters $\sigma_0$ represent the boundary of the power of our attack in practice, i.e., we tried higher $\sigma_0$ and the attack failed.

\begin{table}[H]
\begin{center}
\caption{Attacked instances of DRLWE for $K= \bQ(\zeta_p)$}
\label{tab: ramified}
\begin{tabular}{c|c|c|c}
$q$ $( = p)$ & $n$ & $\sigma_0$ & runtime (in seconds) \\
\hline
251 & 250 &  0.5 & 2.62\\
503 &  503 & 0.575 & 12.02\\
809 & 808 & 0.61 & 34.38\\
\end{tabular}
\end{center}
\end{table}

\subsection{Attacking dual RLWE}
\label{attackingdual}

We adopt our attack to the decision version of dual RLWE for the field $K = \bQ(\zeta_p)$, with no assumptions on the modulus $q$. Keep the notations as above, and let $R^{\vee}$ be the dual ideal of $R$. Let $r>0$ be the width parameter. Then the error $e$ is sampled from the continuous spherical Gaussian distribution of width $r$, which is denoted $D_r$ in \cite{lyubashevsky2013ideal}. Recall that the secret $s \in R^{\vee}/qR^{\vee}$, and an RLWE sample is $(a, b= as+e) \in R_q \times K_\bR/qR^{\vee}$.

We start by scaling the second coordinate by $p$. Then $b' = bp = a(ps) + pe$. Using the fact that $pR^{\vee} = \fp$, we see that $s' = ps \in \fp/q\fp$, and $e' = pe \in K_\bR/q\fp$. Thus we can regard $s'$ as the new secret, and $e'$ as the new error.

Note that the scaled error $e' = pe$ is sampled from the continuous spherical Gaussian $D_{pr}$. Equivalently, by \cite{ducas2012ring}, we may assume $e'$ is sampled as
\[
    e' = p \cdot e = \sum_{i=0}^{p-1}  e_i \zeta_p^i.
\]
where the coefficients $e_i$ are i.i.d. one-dimensional Gaussians with width $\sqrt{p}r$.

Recall that our goal is to tell the difference between the above samples and samples chosen uniformly from $R_q \times K_\bR/q\fp$. Let $\beta = \zeta_p - 1$. Note that $K = \bQ(\beta)$, hence every element in $K_\bR$ can be uniquely written as $\sum_{i=0}^{p-2} a_i \beta^i$ ($a_i \in \bR$). Consider the map
\[
    \rho: K_\bR \to \bR: \sum a_i \beta^i \mapsto a_0. \tag{*}
\]
It is clear that $\rho$ is additive. We examine the image of $e'$ under the map $\rho$. We write
\[
	e' = \sum_{i=0}^{p-2} \epsilon_i \beta^i,  (\epsilon_i \in \bR).
\]
Then one verfies that $\epsilon_0 = e_0 + \cdots + e_{p-2} - (p-1)e_{p-1}$ and we have $\rho(e) = \epsilon_0$.


Now we make two observations: first, since the ideal $\fp$ is generated by $\beta$, we have $\rho(\fp) = p\bZ$;  second, we have $as' = a (p \cdot s) \in \fp/q\fp$. Combining these observations, we see that
$$\rho(b') = \rho(as) + \rho(e') \equiv \rho(e') \equiv \epsilon_0 \mod{p\bZ}.$$

We could describe our attack on the decision RLWE as follows: for each scaled sample $(a,b')$, we compute $\rho(b')$. Then we perform a statistical test on the set $\{\rho(b') \mod{p\bZ}\} \subseteq \bR/p\bZ$ to distinguish it from the uniform distribution on the circle $\bR/p\bZ$.

Note that this attack did not involve the modulus $q$, thus it can be applied to {\it any} modulus. This is in contrast to the previous attack on the non-dual case, where the attack was only performed under the assumption that $q = p$ is the unique ramified prime.


\begin{remark}
The search-to-decision reduction for dual RLWE in cyclotomic fields and completely split modulus is proved in
\cite{lyubashevsky2013ideal}. However, the theorem requires that the error width $r \geq \eta_\epsilon(R^{\vee})$ for some negligible $\epsilon = \epsilon(n)$. (Here $\eta_\epsilon(R^{\vee})$ is the smoothing parameter defined in
\cite{micciancio2007worst}. For $R = \bZ[\zeta_p]$, if we take $\epsilon = 2^{-p+1}$, then one sees that $\eta_\epsilon(R^{\vee}) \leq 1$, and $\eta_\epsilon(R^{\vee})$ tends to 1 in the limit as $p \to \infty$).  Hence the search-to-decision reduction of  \cite{lyubashevsky2013ideal} essentially requires $r \geq 1$, which is above the parameters we can attack ($r \sim 1/\sqrt{p}$).
So in this particular case, our attack on the decision problem cannot be transferred to an attack on the search problem using this search-to-decision reduction.
\end{remark}

Table~\ref{dual prime cyclotomic} records some successful attacks. Note that we have omitted the modulus $q$ since it
is irrelevant to the attack.  We used 50$\sim$400 bins for the chi-square tests. We observe from the table that the error width we can attack is about a constant times $1/\sqrt{p}$, and that the constant is growing (if slowly) as $p$ grows.

\begin{table}[h!]
\caption{Attacking the dual RLWE in $\bQ(\zeta_p)$}
\begin{center}
\label{dual prime cyclotomic}
\begin{tabular}{c|c|c|c|c}
$p$ & $r\sqrt{p}$ & no. samples & average run time & success rate \\ \hline 
307 & 0.82 & 1535 & 0.048 second& 6 out of 10  \\
507 & 0.83 & 2515 & 0.076  second & 8 out of 10 \\
809 & 0.85 &  4045 & 0.134 second & 6 out of 10 \\
997 & 0.86 & 4985 & 0.154 second & 5 out of 10\\
1103 & 0.87 & 5515 & 0.192 second & 5 out of 10 \\
1201 & 0.88 & 6005 & 0.202 second& 2 out of 10 \\
\end{tabular}
\end{center}
\end{table}

\section{Can Modulus Switching be Used?}
\label{sec:mod}

The modulus switching procedure is a technique to reduce noise
in RLWE samples, and has been discussed extensively in  \cite{brakerski2012leveled} and \cite{langlois2014worst}.
We recap the basic ideas of modulus switching. Let $\cR = (K, q, \sigma, s)$ be an RLWE instance. Choose another prime $p$ less than $q$ as the new modulus and consider the instance $\cR' = (K,p,\sigma',s)$ for some $\sigma' > \sigma$.  We can ``switch modulus'' if there exists a map
\[
\pi_{q,p} : R_q \to R_p,
\]
which takes RLWE samples with respect to $\cR$ to RLWE samples with respect to $\cR'$.  In what follows, we give a heuristic argument that our attack will not work in combination with modulus switching under a na\"ive implementation, and isolate the key characteristics a successful implementation of the attack would require.

One example of a map $\pi_{q,p}$ being used in practice is as follows. Let $\alpha = \frac{p}{q}$ and fix a small positive number $\tau$. For an equivalence class $[a]$ in $R_q$, we sample a vector $a'$ from the ``shifted discrete Gaussian'' $D_{\Lambda_R, \tau, \alpha a}$,  defined as follows. For a lattice $\Lambda$ and a vector $c \in \bR^n$,
\[
    D_{\Lambda, \tau, c}(x) = \frac{\rho_\tau(x - c)}{\sum_{y \in\Lambda} \rho_\tau(y-c)}, \, \forall x \in \Lambda.
\]
Finally, we set $ \pi_{q,p}([a]) = a' \pmod{pR}$. Note that the definition of $\pi_{q,p}([a])$ is independent of the choice of  representative $a$, as follows.  Suppose we choose another representative $a_1$, then $a_1 = a+ \lambda q$ for some $\lambda \in R$, hence $\alpha a_1 = \alpha a + \lambda p$.  Finally, observe that the shifted discrete Gaussian behaves well under translating by a lattice point, i.e., we have $D_{\Lambda, \tau, c+u} = D_{\Lambda, \tau, c} + u$ for any $u \in \Lambda$.

Put loosely, the map $\pi_{q,p}$ scales $a$ by $p/q$ and then rounds back into the lattice.  It is a natural question then to ask whether modulus switching can be combined with our attack, to switch from a ``strong'' modulus to a ``weak'' modulus.  However, a heuristic argument shows that the naive combination of our attack with modulus switching will not work.

Let $a'' = \alpha a - a'$. By construction, we expect $a''$ to be a  short vector in $\bR^n$, and the point $a'$ can be viewed as a ``rounding'' of the point $\alpha a$ to the lattice $\Lambda_R$.

We will make two heuristic assumptions:

\begin{enumerate}
        \item That $\pi_{q,p}$ takes the uniform distribution on $R_q$ to an almost uniform distribution on $R_p$.
        \item The distribution of $b''$ and $(sa)''$ is independent modulo $\fq$, for $s \neq \pm 1$.
\end{enumerate}

\begin{proposition}
        Under the assumption that $\pi_{q,p}$ takes the uniform distribution on $R_q$ to an almost uniform distribution on $R_p$, the reduction of $a''$ modulo $\mathfrak{p}$ will be almost uniformly distributed in $R/\mathfrak{p}R$.
\end{proposition}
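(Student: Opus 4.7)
The plan is to reduce the statement about $a''$ to one about $a'$, which we can handle directly from the hypothesis on $\pi_{q,p}$. The first key step is the algebraic identity $qa'' = pa - qa' \in R$, which follows immediately from $a'' = \alpha a - a'$ and $\alpha = p/q$. Reducing both sides modulo $\mathfrak{p}$ kills the term $pa$ (because $p \in \mathfrak{p}$), leaving $qa'' \equiv -qa' \pmod{\mathfrak{p}}$. Since $p \neq q$, the prime $q$ lies outside $\mathfrak{p}$, so $q$ is a unit in $R/\mathfrak{p}$ and one may cancel it to obtain $a'' \equiv -a' \pmod{\mathfrak{p}}$. Here ``$a'' \bmod \mathfrak{p}$'' is interpreted via the canonical isomorphism $(1/q)R / (1/q)\mathfrak{p} \cong R/\mathfrak{p}$, so that the identity makes sense even though $a''$ a priori lives in $K_\bR$ rather than in $R$.

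The second step is to invoke the hypothesis. By assumption, when $a$ is uniform on $R_q$, the distribution of $\pi_{q,p}(a) = a' \bmod pR$ is almost uniform on $R_p$. Because $pR \subseteq \mathfrak{p}$, the projection $R_p = R/pR \to R/\mathfrak{p}$ is a well-defined surjective group homomorphism; pushing forward an almost-uniform distribution along any surjection between finite abelian groups yields an almost-uniform distribution on the target, with statistical distance to the uniform distribution no larger than at the source. Thus $a' \bmod \mathfrak{p}$ is almost uniform in $R/\mathfrak{p}$, and hence so is its negative $-a' \bmod \mathfrak{p}$, which by the first step equals $a'' \bmod \mathfrak{p}$.

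The only conceptual subtlety I anticipate is making precise what it means to reduce $a''$ modulo $\mathfrak{p}$ in the first place, since a priori $a''$ sits in $K_\bR$ rather than $R$; the $q$-divisibility observation $qa'' \in R$ together with the fact that $q$ is a unit modulo $\mathfrak{p}$ resolves this cleanly. Beyond this bookkeeping, the remainder of the argument is a routine combination of the hypothesis with the standard fact that pushing an almost-uniform distribution through a surjection of finite abelian groups preserves almost-uniformity, so I do not expect a genuine obstacle.
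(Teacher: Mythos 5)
Your proof is correct and takes essentially the same route as the paper: your ``multiply by $q$, reduce mod $\mathfrak{p}$, cancel the unit $q$'' argument is exactly the paper's extension $\phi\colon \frac{1}{q}R \to R/\mathfrak{p}$ (reduce mod $\mathfrak{p}\frac{1}{q}R$, multiply by $q$, then by $[q]^{-1}$), and both arguments hinge on $\phi(\alpha a)=0$ since $p\in\mathfrak{p}$, giving $a''\equiv -a' \pmod{\mathfrak{p}}$. Your explicit final step --- pushing the almost-uniform distribution of $a' \bmod pR$ through the surjection $R_p \to R/\mathfrak{p}$ --- is left implicit in the paper, so spelling it out is a welcome addition rather than a deviation.
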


\begin{proof}
The reduction map $R \rightarrow R/\mathfrak{p}$
is a ring homomorphism that can be extended to a homomorphism of additive groups $\phi: \frac{1}{q} R \rightarrow R/\mathfrak{p}$ by the following chain of maps:
\[
        \frac{1}{q} R \xrightarrow{\pmod{\mathfrak{p}\frac{1}{q}R}} \frac{1}{q}R \Big/ \mathfrak{p}\frac{1}{q}R \xrightarrow{ \times q} R/\mathfrak{p}R \xrightarrow{ \times [q]^{-1}} R/\mathfrak{p}R.
\]
Then the relation $a'' + a' = \alpha a$ is preserved by this map.  However, $\phi(\alpha a) = 0 \pmod{\mathfrak{p}}$, so that $\phi(a'') \equiv -\phi(a')$.
\end{proof}

Suppose we have a sample $(a,b)$ and the switched sample $(a', b') = (\pi_{q,p}(a),\pi_{q,p}(b))$. Consider the error $e':= b' - a's$. Suppose $b = as+e+ \lambda q$ for some $\lambda \in R$. Then
\begin{align*}
    e' &= b' - a's  \\
    &= \alpha(b-as) - b''  + a''s. \\
    & = \alpha e + \lambda p - b'' + a''s.
\end{align*}
and therefore, considering this as an additive relation in $\frac{1}{q}R$ and applying the map of the proof above,
$$e' \equiv - b'' + a'' s \pmod{\fp}.$$

By the Proposition above, $a''$ and $b''$ are uniformly distributed modulo $\fp$.  Hence, if we assume the $a''$ and $b''$ are independent, then the reduced rounding errors $a'' \pmod{\fp}$ and $b'' \pmod{\fp}$ are also independent, and the new reduced errors $e' \pmod{\fp}$ would follow the uniform distribution. So our chi-square attack will fail on these modulus-switched samples, even though $p$ might be a ``weak'' modulus.

Therefore, the best hope of attack is if one of our two assumptions is violated by a map $\pi_{q,p}$.  The second is the most likely target.  Note that $a''$ and $b''$ are the rounding errors when we try to round $\alpha a$ and $\alpha b$ to the lattice $\Lambda_R$. However, $\Lambda_R$ is a $n$-dimensional lattice, so there are $\Omega(2^n)$ options of rounding a vector in $\bR^n$ to a moderately close lattice point. Even in the scenario with zero error, i.e., $e = 0$, an attacker will face the task of finding a ``nice'' rounding algorithm, so that the roundings of the two vectors $\alpha a$ and $\alpha b = \alpha a s$ are somehow related.

So far, we are not aware of any such algorithm, unless the secret $s$ is trivial, e.g., $s = 1$, in which case $\alpha a$ is almost equal to $\alpha b$, and one expects that $a''$ is close to $b''$.

\bibliographystyle{splncs}
\bibliography{galois-rlwe}

\section{Appendix A: Code}

\subsection{SubgroupModm.sage}

This file contains the object needed for manipulating subgroups $H$ of $(\bZ/m\bZ)^*$.

\scriptsize
\begin{verbatim}

class SubgroupModm:
    """
    a subgroup of (Z/mZ)^*
    """

    def __init__(self,m, gens, elements = None):
        self.m = m
        self.phim = euler_phi(m)
        self.Zm = Integers(m)

        newgens = []
        for a in gens:
            a = self.Zm(a)
            if not a.is_unit():
                raise ValueError('the generator %s must be a unit in the ambient group.'%a)
            newgens.append(a)

        self.gens = newgens

        if elements is None:
            print 'computing group elements...'
            t = cputime()
            self.H1 = self.compute_elements()
            print 'Time = %s'%cputime(t)
            sys.stdout.flush()
        else:
            self.H1 = elements

        self.order = len(self.H1)
        print 'group order = %s'%self.order
        sys.stdout.flush()

        self._degree = ZZ(self.phim // self.order)

        print 'computing coset representatives...'
        t = cputime()
        self.cosets = self.cosets()
        print 'Time = %s'%cputime(t)
        sys.stdout.flush()

        self._is_totally_real = self.is_totally_real()

        if not self._is_totally_real:
            merged_cosets = []
            for c in self.cosets:
                if not any([-c/d in self.H1 for d in merged_cosets]):
                    merged_cosets.append(ZZ(c))
            newcosets = merged_cosets + [-a for a in merged_cosets]
            self.cosets = newcosets

    def __repr__(self):
        return "subgroup of (Z/%sZ)^* of order %s generated by %s"%(self.m, self.order, self.gens)

    def is_totally_real(self):
        """
        The fixed field Q(zeta_m)^H is totally real if and only if -1 mod m \in H.
        """
        return self.Zm(-1) in  self.compute_elements()

    def compute_elements(self):
        """
        core function. Gives all the group elements
        """
        gens = self.gens
        result = [self.Zm(1)]
        for gen in gens:
            if gens != self.Zm(1):
                order = gen.multiplicative_order()
                pows = [gen**j for j in range(order)]
                result = set([a*b for a in result for b in pows])
        return result

    def cosets(self):
        """
        another core function, assuming we have elements, this shouldn't be hard.
        """
        Zm = self.Zm
        elts = self.H1
        m = self.m
        result =[]
        explored = []

        for a in range(m):
            if gcd(a,m) == 1 and a not in explored:
                for h in elts:
                    explored.append(h*a)
                result.append(Zm(a))
            if euler_phi(m) ==  len(result)*len(elts): # already have enough cosets
                return result

    @cached_method
    def coset(self, a):
        """
        elt -- an integer
        returns the coset representative for this element
        """
        Zm = self.Zm
        for bb in self.cosets:
            if Zm(a)/Zm(bb) in set(self.H1):
                return bb
        raise ValueError('did not find a coset.')

    def extension_degree(self,vec):
        """
        vec -- a vector indexed by cosets of self, representing an element z in K.
        return the degree of the extension QQ(z)/QQ.
        """
        try:
            vec = list(vec)
        except:
            raise ValueError('input can not be turned into a list. Please debug.')
        C = self.cosets
        ele_dict = dict([(a,b) for a,b in zip(C,vec) if b != 0])
        fixGpLen = 0
        for ll in C:
            fixed = True
            for a in ele_dict.keys():
                lla = self.coset(ll*a)
                try:
                    coef = ele_dict[lla]
                except:
                    fixed = False
                    break
                if coef != ele_dict[a]:
                    fixed = False
                    break
            if fixed:
                fixGpLen += 1
        return self._degree // fixGpLen


    def _check_cosets(self):
        """
        sanity check that the cosets has been computed correctly.
        """
        H1 = self.H1
        cosets = self.cosets
        from itertools import combinations
        return not any([c[1]*c[0]**(-1) in H1 for c in combinations(cosets, 2)])


    def __hash__(self):
        return hash((self.m,tuple(self.gens)))


    def _associated_characters(self):
        """
        Definition: a Dirichlet character chi of modulus m is associated to
        a subgroup H <= Z/mZ)^* if chi|_H = 1.

        return all the associated characters of self.
        """
        m, Zm = self.m, self.Zm
        G = DirichletGroup(m)
        H1 = Set(self.compute_elements())

        result =[]
        for chi in G:
            ker_chi = Set([Zm(a) for a in chi.kernel()]) # a list of integers
            if H1.issubset(ker_chi):
                result.append(chi)
        return result

    def multiplicative_order(self, a):
        """
        return the multiplicative order of [a] in the quotien group G/H
        """
        m = self.m
        Zm = self.Zm
        if gcd(m,a) != 1:
            raise ValueError
        a = Zm(a)
        o = self._degree
        for dd in o.divisors()[:-1]:
            if a**dd in self.H1:
                return dd
        return o

    def discriminant(self):
        """
        return, up to sign, the discriminant of the fixed field of self as a subfield of Q(zeta_m).
        """
        return prod([chi.conductor() for chi in self._associated_characters()])

    def intersection(self, other):
        """
        intersection of two subgroups of the same m.
        """
        if self.m != other.m:
            raise ValueError('the underlying m of self and other must be same.')
        H1 = self.H1
        H1other = other.H1
        Hnew = Set(H1).intersection(Set(H1other))
        print 'size of intersection = %s'%len(Hnew)
        Hnew_reduced = _reduce_gens(self.m,Hnew)
        print 'reduced gens for intersection  = %s'%Hnew_reduced
        sys.stdout.flush()
        return SubgroupModm(self.m, Hnew_reduced, elements = Hnew)

def _reduce_gens(m,H1):
    """
    given a full group, get a short list of generators.
    """
    Zm = Integers(m)
    gens = set([])
    gensSpan = set([Zm(1)])
    for a in H1:
        if Zm(a) not in gensSpan:
            sys.stdout.flush()
            ordera = Zm(a).multiplicative_order()
            alst  = [Zm(a)**j for j in range(1, ordera)]
            newelts = set([cc*aa for cc in gensSpan for aa in alst])
            gensSpan  |=  newelts
            gens.add(a)
        if len(gensSpan) == len(H1):
            # found enough generators.
            return list(gens)
    raise ValueError('did not find enough generators.')
\end{verbatim}

\subsection{MyLatticeSampler.sage}

\normalsize
This file allows sampling from discrete lattice Gaussian distributions using the algorithm in \cite{gentry2008trapdoors}.
It took the current implementation in sage and modified it slightly to fix some issues. The authors claim no originality of any code in this file.
\scriptsize
\begin{verbatim}
from sage.stats.distributions.discrete_gaussian_integer import DiscreteGaussianDistributionIntegerSampler
def _fpbkz(A, K = 10**20, block = 8, delta = 0.75):
    """
    including a transpose operation.
    """
    print 'blocksize for bkz = %s'%block
    At = A.transpose()
    RF = A[0][0].parent()
    AA = Matrix(ZZ, [[ZZ(round(K*a)) for a in row] for row in list(At)])
    F = FP_LLL(AA)
    F.BKZ(block_size = block, delta= delta)
    B = F._sage_()
    T = B*AA**(-1)
    B1 = Matrix(RF, [[a/RF(K) for a in row] for row in list(B)])
    return T.transpose().change_ring(ZZ), B1.transpose()


class MyLatticeSampler:
    """
    Sampling from discrete Gaussian.
    """

    def __init__(self,A,sigma = 1,dps = 60, method = 'LLL', block = None, already_orthogonal = False, gram_schmidt_norms = None):
        self.A = A # we are using column span instead of rowspan
        self.sigma = sigma

        print 'reducing the lattice...'
        t = cputime()
        self._degree = A.nrows()
        if method == 'LLL':
            self.T =  self._lll_reduce()
        elif method == 'BKZ':
            self.T = self._bkz_reduce(block = block)
        else:
            print 'no reduction is done.'
            self.T = identity_matrix(self._degree)
        self.B  = self.A*self.T
        print 'reduction done. Time: %s'%cputime(t)


        print 'Gram Schmidting...'
        t = cputime()

        if already_orthogonal: # The columns of A are already gram-schmidt.
            self._G = self.A
            if gram_schmidt_norms is None:
                self._gs_norms = [self._G.column(i).norm() for i in range(self._degree)]
            else:
                self._gs_norms = gram_schmidt_norms
        else:
            # Compute the gram-schmidt ourselves. Can be slow.
            self._gs_norms, self._G = self.compute_G(dps =  dps)
        print 'Gram Schmidt done. Time: %s'%cputime(t)

        self.final_sigma = sigma*(prod(self._gs_norms))**(1/self._degree)


    def _bkz_reduce(self,block = None):
        print 'bkz being performed...'
        if block is None:
            block = min(50, ZZ(self._degree // 2))
        return _fpbkz(self.A, block = block)[0]

    def _lll_reduce(self):
        print 'lll being performed...'
        A = self.A
        return gp(A).qflll().sage()

    @cached_method
    def col_sum(self):
        """
        related to the evaluation attack, return the list a where
                   a[i] = colsum(A^-1,i)
        """
        return vector([1 for _ in range(self._degree)])*(self.A**(-1))

    def babai_quality(self):
        """
        inspired by Kim's explanation, I think the quality of a basis
        for babai should be the ratio ||\tilde{bn}||/||\tilde{b1}||
        """
        gs_norms = self._gs_norms
        return float(min(gs_norms)/max(gs_norms))

    def __repr__(self):
        return 'Discrete Gaussian sampler with dimension %s and sigma = %s'%(self._degree, self.final_sigma.n())

    def compute_G(self, dps = 50):
        t = cputime()
        B = self.B
        n = self._degree
        from mpmath import *
        mp.dps = dps
        prec = dps*6
        AA = mp.matrix([list(w) for w in list(B)])
        Q,R = qr(AA) # QR decomposition

        M = mp.matrix([list(Q.column(i)*R[i,i]) for i in range(n)]);
        M_sage = Matrix([[RealField(prec)(M[i,j]) for i in range(n)] for j in range(n)])
        verbose('gram schmidt computation took %s'%cputime(t))
        return [abs(RealField(prec)(R[i,i])) for i in range(n)], M_sage

    def set_sigma(self,newsigma):
        self.final_sigma = newsigma

    def babai(self,c):
        """
        run babai's algorithm and find a lattice vector close to the
        input point c.
        Note this is super similar to the __call__ function

        Returns a tuple (v,z), where v is the actual vector in R^n,
        and z is its coordinate *in terms of a*. So we have
        v = Az.
        """
        n = self._degree
        try:
            c = vector(c)
        except:
            pass
        G, norms  = self._G, self._gs_norms
        B = self.B
        T = self.T
        zs = []
        v = c

        for i in range(n)[::-1]:
            b_ = G.column(i)
            v_ = v.dot_product(b_) / norms[i]**2
            z = ZZ(round(v_))
            v = v - z*B.column(i)
            zs.append(z)
        return c - v, T*(vector(zs[::-1]))

    def __call__(self, c = None):
        """
        c -- an n-dimensional vector, so that we are sampling a discrete gaussian
        centered at c.
        """
        v = 0
        sigma, G = self.final_sigma, self._G
        n = self._degree
        if c is None:
            c = zero_vector(n)
        B = self.B
        T = self.T
        zs = []
        norms = self._gs_norms
        for i in range(n)[::-1]:
            b_ = G.column(i)
            c_ = c.dot_product(b_) / norms[i]**2
            sigma_ = sigma/norms[i]
            assert(sigma_ > 0)
            z = DiscreteGaussianDistributionIntegerSampler(sigma=sigma_, c=c_, algorithm="uniform+table")()
            c = c - z*B.column(i)
            v = v + z*B.column(i)
            zs.append(z)
        return v, T*vector(zs[::-1])
\end{verbatim}

\subsection{SubCycSampler.sage}

\normalsize This file allows generating the errors and reducing them modulo prime ideals when the field $K$ is a subfield of some cyclotomic field with
odd and squarefree $m$.

\scriptsize
\begin{verbatim}
from sage.stats.distributions.discrete_gaussian_integer import DiscreteGaussianDistributionIntegerSampler
import sys

class SubCycSampler:
    """
    We write our own GPV sampler for sub-cyclotomic fields.
    It also has the functionality of simulating an attack.

    Caution: according to GPV, we need to have s >= ||tilde(B)||*log(n)
    for the sampler to approximate discrete lattice Gaussian. So if
    s is smaller than what is required, the __call__() method is not
    guaranteed to output discrete Gaussian.
    """

    def __init__(self,m,H,sigma = 1,prec = 100, method = 'BKZ',block = None):
        """
        require: m must be square free and odd.

        disc: the discriminant of K =  Q(zeta_m)^H. We pass it
        as an optional parameter, since when the order of H is
        large, the computation could be very slow.
        """

        self.m = m
        self.H = H

        self.H1 =self.H.H1
        sys.stdout.flush()


        self.cosets = H.cosets

        self.sigma = sigma
        self.prec = prec


        t = cputime()
        self._degree = euler_phi(m) // len(self.H1)

        self._is_totally_real = self.H._is_totally_real


        print 'computing embedding matrix...'
        t = cputime()
        self.TstarA, self.Acan = self.embedding_matrix(prec = self.prec)
        self.Acaninv  = None
        print 'time = %s'%cputime(t)
        sys.stdout.flush()

        self.D = MyLatticeSampler(self.TstarA, sigma = self.sigma, method = method, block = block)
        self.Ared = self.D.B

        self._T = self.D.T

        self.final_sigma  =self.D.final_sigma
        self.secret = self.__call__()


    def __repr__(self):
        return 'RLWE error sampler with m = %s,  H = %s, secret  = %s and sigma = %s'%(self.m, self.H, self.secret, self.final_sigma.n())

    def minpoly(self):
        K.<z> = CyclotomicField(self.m)
        return sum([z**h for h in self.H1]).minpoly()

    def compute_G(self, prec = 53):
        """
        computing a colum gram-schmidt basis for the embedded lattice O_K.
        return the basis and the length of each vector as a list.

        Modified on 8/2: do this after using LLL to reduce the basis.
        """
        B = self.Ared
        n = self._degree
        from mpmath import *
        mp.dps = prec // 2
        BB = mp.matrix([list(ww) for ww in list(B)])
        Q,R = qr(BB) # QR decomposition
        M = mp.matrix([list(Q.column(i)*R[i,i]) for i in range(n)]);
        M_sage = Matrix([[RealField(prec)(M[i,j]) for i in range(n)] for j in range(n)])
        v = [abs(R[i,i]) for i in range(n)]
        return M_sage,v # vectors are columns

    def degree_of_prime(self,q):
        """
        return the degree of q in K
        """
        if not q.is_prime():
            raise ValueError('q must be prime')
        return (self.H).multiplicative_order(q)

    def degree_n_primes(self, min_prime, max_prime, n =1):
        """
        return a bunch of primes of degree n in K. When n = 1, this
        is split primes.
        """
        result = []
        for p in primes(min_prime, max_prime):
            try:
                if self.degree_of_prime(p) == n:
                    result.append(p)
            except:
                pass
        return result

    def basis_lengths(self):
        return [self.Ared.column(i).norm() for i in range(self._degree)]


    def galois_permutation(self, c):
        """
        c -- a coset.
        returns a dictionary d such that  d[a] = \sigma_c(a),
        representing a Galois group action.
        """
        H = self.H
        Zm = Integers(self.m)
        c = Zm(c)
        d = {}
        for a in self.cosets:
            d[a] = self.H.coset(a*c)
        return d

    def _vec_modq_coset_dict(self,q):
        vec = self.vec_modq(q)
        cc = self.cosets
        return dict(zip(cc,vec))

    def vec_modq_twisted_by_galois(self,q,c, reduced = False):
        _dict = self._vec_modq_coset_dict(q)
        _galois = self.galois_permutation(c)
        result  = []
        for a in self.cosets:
            result.append(_dict[_galois[a]])
        if not reduced:
            return vector(result)
        else:
            return vector(result)*self._T

    def embedding_matrix(self, prec = None):
        """
        We are in a simplified situation because the field K is Galois over QQ,
        so it is either totally real or totally complex.
        to-do: can optimize this.
        """
        m = self.m
        H1 = self.H1
        if prec is None:
            prec = self.prec
        C = ComplexField(prec)
        zetam = C.zeta(m)
        cosets = self.cosets
        n = self._degree

        _dict = {}
        for l in cosets:
            _dict[l]  = sum([zetam**(ZZ(l*h)) for h in H1])

        A = Matrix([[_dict[self.H.coset(l*k)] for l in cosets] for k in cosets])

        if self._is_totally_real:
            Areal = _real_part(A)
            return Areal, A
        else:
            T = t_matrix(n,prec = prec)
            return _real_part(T.conjugate_transpose()*A),A

    def coset_reps(self):
        """
        I need this for representing the basis vectors. Each coset rep c
        represents the element \alpha_c =  \sum_{h \in H} \zeta_m^{ch}.
        """
        return self.cosets

    def __call__(self,c = None):
        """
        return an integer vector a = (a_c) indexed by the coset reps of self,
        which represents the vector \sum_c a_c \alpha_c
        Use the algorithm of [GPV].
        http://www.cc.gatech.edu/~cpeikert/pubs/trap_lattice.pdf

        If minkowski = True, return the lattice vector in R^n. Otherwise,
        return the coordinate of the vector in terms of the embedding matrix of self.
        """
        return self.D(c = c)[1]

    def babai(self,c):
        return self.D.babai(c)[1]

    def _modq_dict(self,q):
        """
        a sanity check of the generators modulo q.
        """
        cc = self.cosets
        vv = self.vec_modq(q)
        return dict(zip(cc,vv))

    def subfield_quality(self):
        """
        portion of elements of our reduced basis that lie in proper subfields.
        """
        T = self._T
        count = 0
        for i in range(S._degree):
            col  = T.column(i)
            sys.stdout.flush()
            deg = S.H.extension_degree(col)
            print 'degree of Q(b_i) = %s'%deg
            sys.stdout.flush()
            if deg < S._degree:
                count += 1
        return float(count/S._degree)



    def subfield_quality_modq(self,q, twist = None):
        if twist is None:
            vq = self.vec_modq(q,reduced = True)
        else:
            vq = self.vec_modq_twisted_by_galois(q,twist, reduced = True)
        F = vq[0].parent()
        deg = F.degree()
        return float(len([aa for aa in vq if aa.minpoly().degree() < deg])/self._degree)


    @cached_method
    def vec_modq(self,q, reduced = False):
        """
        the basis elements (normal integral basis) modulo q.

        If reduced is true, return the LLL-reduced basis mod q

        v dot Tz = (vT) dot z
        """
        m = self.m
        degree = self.degree_of_prime(q)
        v = finite_cyclo_traces(m,q,self.cosets,self.H1, deg = degree) # could be slow
        if not reduced:
            result = vector(v)
        else:
            result =  vector(v)*self._T
        return result

    def _to_ccn(self, lst):
        """
        convert an element in O_K from C^n to Z^n.
        """
        return list(self.Acan*vector(lst))


    def _to_zzn(self,lst):
        """
        the inversion of the above.
        """
        if self.Acaninv is None:
            self.Acaninv = (self.Acan)**(-1)
        return list(self.Acaninv*vector(lst))


    def _prod(self,lsta, lstb):
        """
        multiplying two field elements using the canonical embedding
        """
        lsta, lstb = list(lsta), list(lstb)
        lsta_cc, lstb_cc = self._to_ccn(lsta), self._to_ccn(lstb)
        float_result = self._to_zzn([aa*bb for aa, bb in zip(lsta_cc,lstb_cc)])
        return [ZZ(round(tt.real_part())) for tt in float_result]

    def set_sigma(self,newsigma):
        self.D.final_sigma = newsigma

    def set_secret(self, newsecret):
        self.secret = newsecret
\end{verbatim}

\subsection{Chisquare.sage}

\normalsize
This file implements a variant of the chi-square test over finite fields $F_{q^f}$ for $q$ a prime and $f > 1$.

\scriptsize
\begin{verbatim}
def subfield_unifrom_test(samples, probThreshold = 1e-5):
    """
    Assume that the samples are from a finite field.
    we separate the ones that are from a proper subfield.
    """
    F = samples[0].parent()
    q = F.characteristic()
    degF = F.degree()
    numsamples = len(samples)
    eltsWithFullDegree = elts_of_full_degree(q,degF)
    nSmall = 0
    nLarge = 0
    for aa in samples:
        if aa.minpoly().degree() < degF:
            nSmall +=1
        else:
            nLarge +=1
    card = q**degF
    eLarge= float(eltsWithFullDegree/card*numsamples)
    eSmall= numsamples - eLarge
    verbose('eSmall, eLarge = %s,%s'%(eSmall, eLarge))
    verbose('nSmall, nLarge = %s,%s'%(nSmall, nLarge))
    if min(eSmall, eLarge) < 5:
        raise ValueError('samples size too small.')

    chisquare = (nSmall - eSmall )^2/eSmall + (nLarge - eLarge)^2/eLarge
    T = RealDistribution('chisquared', 1)
    verbose('chisquare = %s'%chisquare)
    prob = 1 - T.cum_distribution_function(chisquare)
    if prob < probThreshold:
        verbose('non-uniform')
        return False
    else:
        verbose('uniform')
        return True
\end{verbatim}

\subsection{Example of an attack}
\label{subsec: code}

\normalsize
This code implements the attack on an Galois RLWE instance described in Section~\ref{subsec: detailed}.

\scriptsize
\begin{verbatim}
print 'We peform the full attack on an Galois instance.'

totaltime = cputime()
import sys

load('SubgroupModm.sage','MyLatticeSampler.sage','SubCycSampler.sage','Chisquare.sage')

def _my_dot_product(lst1,lst2):
    return sum([a*b for a,b in zip(lst1,lst2)])

m = 3003; H = SubgroupModm(m, [2276, 2729, 1123]);
S = SubCycSampler(m,H);

sigma0 = 1.0

S = SubCycSampler(m,H,prec = 300, method = 'LLL', sigma = sigma0)

print 'S = %s'%S

q = 131
degq = H.multiplicative_order(q)
print 'degree of prime q = %s is %s'%(q, degq)
sys.stdout.flush()

print 'final sigma = %s'%S.final_sigma

print 'degree of field = %s'%(euler_phi(m)//H.order)
sys.stdout.flush()

numsamples = 1000;
print 'generating %s errors...'%numsamples
sys.stdout.flush()
errors = []
for dd in range(numsamples):
    error = S()
    errors.append(error)
    if dd > 0 and Mod(dd,1000) == 0:
        print '%s/%s samples generated'%(dd, numsamples)
        print 'an example error is %s'%error
        sys.stdout.flush()
print 'error generation done.'
sys.stdout.flush()
save(errors, 'errors.sobj')

vq = S.vec_modq(q)
print 'vq = %s'%vq
sys.stdout.flush()
F = vq[0].parent()
sys.stdout.flush()

Flst = [a for a in F]
alpha = F.gen()
Fp = F.prime_subfield()
print 'defining polynomial of F = %s'%alpha.minpoly()
sys.stdout.flush()

print 'Generating uniform a...'

alst = [[ZZ.random_element(q) for _  in range(S._degree)] for jj in range(numsamples)]
print 'Generation of uniform a done.'
sys.stdout.flush()

s = [ZZ.random_element(q) for _  in range(S._degree)]
print 'secret = %s'%s
sys.stdout.flush()

# The attack.
success, SUCCESS = True, True
count = 1
for cc in S.cosets:
    t = cputime()
    success = True
    print 'coset %s/%s with representative %s'%(count, S._degree, cc)
    sys.stdout.flush()

    count += 1
    vqcc = S.vec_modq_twisted_by_galois(q,cc)
    smodq = F(_my_dot_product(s,vqcc))
    print 'smodq = %s'%smodq
    sys.stdout.flush()

    amodqlst,bmodqlst = [],[]
    for a,e in zip(alst, errors):
        emodq = F(_my_dot_product(e,vqcc))
        amodq = F(_my_dot_product(a,vqcc))
        amodqlst.append(amodq)
        bmodqlst.append(amodq*smodq+emodq)

    countsmall = 0
    for sguess in Flst:
        countsmall +=1
        sys.stdout.flush()
        if Mod(count, 1000) == 0 or sguess == smodq:
            print 'example run: %s/%s runs'%(count,len(Flst))
            print 'sguess = %s'%sguess
            if sguess == smodq:
                print 'this is the correct guess'
            set_verbose(1)
        reducedErrors = [bb - aa*sguess for aa, bb in zip(amodqlst, bmodqlst)]
        uniform = subfield_unifrom_test(reducedErrors,  probThreshold = 1e-10)
        if uniform and sguess == smodq:
            print 'failed to detect'
            success = False
            break
        elif (not uniform) and sguess != smodq:
            print 'uniform is distorted'
            success = False
            break
        set_verbose(0)
    print 'Done computing with coset [%s]. success = %s'%(cc, success)
    sys.stdout.flush()

    print 'Time taken = %s'%cputime(t)
    SUCCESS = SUCCESS and success
    sys.stdout.flush()
    print '*'*20

print '#'*40
print 'Summary:'
print 'H = %s'%H
print 'degree of field = %s'%(euler_phi(m)//H.order)
print 'q = %s, degree of q =  %s'%(q, degq)
print 'sigma_0 = %s'%sigma0
print 'number of samples = %s'%numsamples
print 'success? : %s'%SUCCESS
print 'Total Time = %s'%cputime(totaltime)
sys.stdout.flush()
\end{verbatim}

\end{document}